\newtheorem*{remark*}{Remark}
\author{Ted To%
}
\title{\bf\LARGE Renewable Natural Resources, Regime Shift and Hysteresis\thanks{
		I thank Len Mirman, Bruno Nkuiya, Erick Sager, Partha Sen and two anonymous referees for helpful comments and suggestions.  I also thank conference and seminar participants at the Midwest Economic Theory meetings, World Congress of the Econometric Society, the Association of Environmental and Resource Economists summer meeting, Canadian Economic Association meeting and the University of Pittsburgh.
	}
}
\date{May 2026}
\begin{document}
	\maketitle
	\begin{abstract}%
		Many of the world's renewable resources are in decline. Optimal harvests with smooth recruitment is well studied but in recent years, ecologists have concluded that tipping points in recruitment are common. Recruitment with a tipping point has low-fecundity below the tipping point and high-fecundity above. When the discounted value of high-fecundity is sufficiently high, there is a high-fecundity steady-state. This steady-state is stable but in some cases, small perturbations may result in large, temporary reductions in recruitment and harvests.
		Below the tipping point, a low-fecundity steady-state need not exist. When a low-fecundity steady-state does exist, there is an endogenous tipping (Skiba) point: below, harvests converge to the low-fecundity steady-state and above, an austere harvest policy transitions the renewable resource to high-fecundity recruitment.  If there is hysteresis in recruitment, the high steady-state may not be stable. Moreover, if the high-/low-fecundity differential is large then following a downward perturbation, fecundity optimally remains low.\bigskip
		
		\parindent0pt Key words: renewable resource management, tipping point, hysteresis, regime shift \\
		\textit{JEL} codes: Q20, Q22, Q23
	\end{abstract}
	
	\thispagestyle{empty}
	\setcounter{page}{0}
	\newpage
	
	\section{Introduction}

	Many of the world's renewable resources are in decline, including fisheries \citep{worm:impacts,jackson:ecological}, forests \citep{fao:state} and wildlife \citep{felbab-brown:extinction}. The theoretical literature on modeling renewable resources has a long history, dating back to \cite{gordon:economic, scott:fishery, smith:economics} and \cite{clark:economics,clark:profit}. While the focus of these works has often been on fisheries, the modeling is applicable to renewable resources in general \citep{clark:mathematical}.\footnote{Indeed, these models are ecological variants of the classical optimal growth model \citep{ramsey:mathematical, cass:optimum, koopmans:concept}.}
	
	The ``smooth recruitment function'' renewable resource problem is well studied \citep{clark:mathematical}, however, it is now believed that many renewable resources are subject to ``tipping points'' in recruitment. For marine resources, there is a consensus that tipping points are important \citep{selkoe:principles, hunsicker:characterizing}. For instance, minimal genetic diversity is required for effective reproduction \citep{kardos:crucial}. %
	For tropical rain forests, the tipping mechanism results from changes in rainfall patterns due to deforestation that transitions the ecosystem from rain forest to savanna \citep{nobre:tipping, malhado:cerrado}.

	Much of the prior work on renewable resources with tipping points models the tipping process using a hazard model.  In a hazard model, there is no fixed tipping point and the likelihood of tipping is non-increasing in the resource stock. This has the interpretation that, at lower stock levels, the ecosystem may be less resilient and adverse shocks may be more likely \citep{reed:optimal, polasky:optimal, zeeuw:managing}. Uncertainty in the hazard model is exogenous to the model in the sense that even if the renewable resource stock remains stationary, a positive hazard rate implies that the renewable resource will eventually tip. 
	
	In this paper, I characterize the optimal harvest of a renewable resource in the presence of tipping points. %
	A recruitment function governs the natural growth rate of the resource stock. 
	Instead of modeling the tipping process using a hazard model, I assume that the location of the tipping point is fixed.\footnote{
		This is not without precedent.  For instance, in their diagrams illustrating regime shift, \cite{dudgeon:phase} and \cite{selkoe:principles} show fixed tipping points. %
	} I consider my approach to be complementary with the hazard literature. Fixed tipping points allow me to consider two important aspects of tipping points. The first is to allow for the possibility that with sufficiently austere harvests, a low fecundity renewable resource is able to recover. Second, given that the renewable resource is able to recover, it then becomes possible to tractably model hysteresis.

	In my model, when there is no hysteresis and when the incremental value of high-fecundity is sufficiently large, there exists a high-fecundity steady-state.
	This high steady-state is stable in the sense that following a small perturbation, the resource stock will quickly return to it. However, if the high steady-state resource stock coincides with the tipping point then even though it is stable, a small perturbation can result in a large temporary fall in both recruitment and harvest rates. %
	
	Below the tipping point, a low-fecundity steady-state need not exist. First, the stationary point associated with the low-fecundity recruitment function may be above the tipping point, rendering it infeasible. Second, even when this stationary point is below the tipping point, if the fecundity differential between the high and low-fecundity recruitment functions is relatively small, the optimal harvest is austere and always leads to the high-fecundity steady-state.
	
	If the low-fecundity stationary point is feasible and the fecundity differential is sufficiently large then the instantaneous cost of austerity is relatively high. In this case, there is a second, endogenous threshold below the tipping point. If the initial resource stock is above this endogenous tipping point, the optimal harvest policy is austere and leads to the high-fecundity steady-state; even though the instantaneous cost of austerity is relatively high, the length of time this austerity must be borne is relatively low.  On the other hand, if the initial resource stock is below this endogenous tipping point then the length of time that austerity needs to be maintained is too high and instead, the optimal harvest policy is the standard one, leading to the low-fecundity steady-state.
	
	These results imply that when the initial resource stock is sufficiently high, the optimal harvest will always attain a high-fecundity stationary point, even from below the tipping point. But when the initial resource stock is small (due perhaps to over-harvesting), absent an external injection of the renewable resource, the optimal harvest policy does not attain high-fecundity. %
	
	When the renewable resource is subject to hysteresis, recruitment is history dependent. In particular, with hysteresis, at high-fecundity recruitment, there is a threshold resource stock below which the renewable resource transitions to low-fecundity and at low-fecundity, there is another, higher threshold required to transition back to high-fecundity \citep{scheffer:catastrophic, dudgeon:phase, selkoe:principles}. That is, at intermediate levels of the resource stock, both high and low-fecundity are possible and the current state of fecundity remains unchanged until the corresponding tipping point is crossed.
	On the high-fecundity recruitment function, if the resource stock falls below the high-fecundity tipping point, recruitment switches to low-fecundity. On the low-fecundity recruitment function, recruitment can only return to high-fecundity if the resource stock rises to the higher, low-fecundity tipping point.
	
	With hysteresis, if the high stationary point coincides with the high-fecundity tipping point then it is no longer stable. A small perturbation can bring the resource stock below the high-fecundity tipping point to low-fecundity recruitment. But now instead of quickly returning to the high-fecundity stationary point, there is either i) significant delay for the resource stock to rise to the (higher) low-fecundity tipping point or ii) recovery is not optimal.
	
	Recruitment with hysteresis accords with what we know about the Atlantic northwest cod fishery collapse of the early 1990s. While the population recovered modestly between 2005 and 2016, populations have since plateaued \citep{dfo:stock-2020,dfo:stock-2023} and the hoped for 2025 recovery \citep{rose:northern} now appears unlikely to have come to fruition.

	In the following Section, I review some of the related literature. In \Cref{sec:model}, I describe the model and in \Cref{sec:solution}, I examine the optimal renewable resource extraction problem without hysteresis. Then, in \Cref{sec:hysteresis}, I discuss how these results change when there is hysteresis in recruitment. Finally, in \Cref{sec:conclude}, I offer some concluding remarks.

	\section{Related Literature}
	
	Much of the theoretical renewable natural resource literature assumes that tipping is irreversible \citep[see][for examples]{polasky:optimal,zeeuw:managing,nkuiya:stochastic}. In \cite{reed:optimal}, regime shift implies the total collapse of the ecosystem and assumes that social welfare is linear in the harvest rate. He shows that optimal harvest policy may be either precautionary or aggressive in comparison to a model without regime shift. \cite{polasky:optimal} also assume that social welfare is linear but rather than catastrophic collapse, regime shift instead reduces the fecundity of the natural resource. They show that instead of being ambiguous, with a change in system dynamics, the optimal harvest is always precautionary. \cite{ren:optimal} and \cite{zeeuw:managing} generalize the model to allow for non-linear social welfare and show that if the planner is relatively indifferent to large fluctuations in the harvest rate (high elasticity of intertemporal substitution) then optimal harvests may become aggressive.
	
	There is a smaller literature that allows for ecosystem recovery. In \cite{reed:optimal}, ecosystem to recovery is randomly exogenous so there is no role for policy to encourage recovery. More recently, a literature has developed that allows for endogenous recovery \citep{baggio-fackler:optimal, kvamsdal:optimal, arvaniti:time}. \cite{baggio-fackler:optimal} numerically evaluate a discrete time hazard model where recruitment is stochastic and the current regime may or may not be observable. Like \cite{polasky:optimal}, potential regime shift induces precautionary behavior. \cite{kvamsdal:optimal} generalizes \cite{baggio-fackler:optimal} to non-linear social welfare and multiple regimes and finds that similar to \cite{ren:optimal} and \cite{zeeuw:managing}, optimal harvests may be either conservative or aggressive.
	
	Finally, \cite{arvaniti:time} departs from the prior literature in several ways.  First, rather than modeling regime shift as a random hazard, they assume a fixed tipping point (like the current paper) below which recruitment has low fecundity. Like \cite{ren:optimal}, \cite{zeeuw:managing} and \cite{kvamsdal:optimal}, they assume that social welfare is continuous, increasing and concave but in contrast, they assume that it has a finite upper-bound. Similar to \cite{baggio-fackler:optimal}, recruitment is stochastic, however, unlike \citeauthor{baggio-fackler:optimal}, they show that there is a positive probability of total, irreversible collapse. Moreover, they make the nonstandard assumption that the planner is ``present biased'' or that they use quasi-hyperbolic discounting; without the ability to commit to a future course of action, they model the planner as a sequence of ``selves'' where each self plays a game against all of its future selves. Because this, they are examining a second best solution, rather than the first best solution studied in much of the prior literature. Their key result is the unambiguous absence of precautionary resource extraction. The existence of a threshold induces a precautionary motive but this is countered by present bias which reduces the future value of the resource stock and reduces the value of ``reinvestment.'' Overall, the latter two effects outweigh the former effect.
	
	\section{The Model}	\label{sec:model}
	
	In dynamic models of renewable resources, growth of the resource stock is governed by a recruitment function, $f(x(t))$, where $x(t)$ is the resource stock at time $t$. In the standard analysis, $f$ is assumed to be a continuous function. In contrast, my interest is in functions that take discrete upward jumps.
	
	\begin{figure}
		\caption{Tipping recruitment}\label{fig:recruit}
		
		\begin{center}
			\includegraphics[scale=.4]{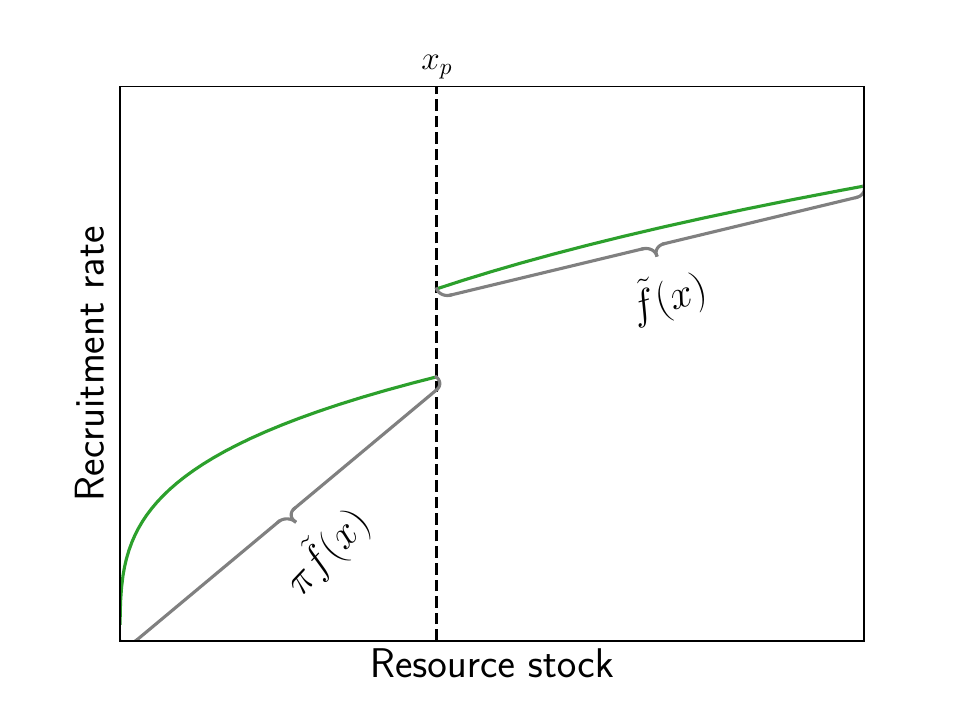}
		\end{center}
	\end{figure}
	
	\begin{figure*}
		\caption*{Legend for \Cref{fig:recruit,fig:no-tip,fig:constrained-upper,fig:optimal,fig:no-upper,fig:hyst-recruit,fig:hysteresis}}
		\begin{center}
			\includegraphics{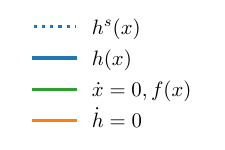}
		\end{center}
	\end{figure*}

	In particular, I define the tipping recruitment function as:
	\begin{equation}	\label{eq:tipping-recruit}
		f(x)=\begin{cases}
			\pi\tilde f(x) & \text{if }x<x_p \\
			\tilde f(x) & \text{if }x\ge x_p
		\end{cases}
	\end{equation}
	where $f(x)$ is the natural growth rate of the resource stock for $x\ge0$. The tipping point is $x_p>0$ and the tipping penalty is $0<\pi<1$. Below $x_p$, recruitment has low fecundity (the lower portion of Figure~\ref{fig:recruit}) and above, recruitment has high fecundity (the upper portion of Figure~\ref{fig:recruit}). The function $\tilde f$ is strictly increasing, %
	twice differentiable, concave, $\tilde f(0)=0$, $\lim_{x\rightarrow0}\tilde f'(x)=\infty$ and $\lim_{x\rightarrow\infty}\tilde f'(x)=0$.

	At time $t$, if $x(t)$ is the resource stock and $h(t)$ is the harvest rate then the growth rate of the resource stock is:
	\begin{equation}	\label{eq:fish-growth}
		\dot x(t)=f(x(t))-h(t).
	\end{equation}
	Net resource growth, $\dot x(t)$, is the natural rate at which the resource grows, $f(x(t))$, less the harvest rate, $h(t)$. When the harvest rate is below the recruitment rate ($h(t)<f(x(t))$), the resource stock is rising and when the harvest rate exceeds the recruitment rate ($h(t)>f(x(t))$), the resource stock is falling. At every time $t$, the resource stock and the harvest rate must be non-negative so that $x(t),h(t)\ge0$.
	
	Given harvest rate, $h(t)$ for $t\ge0$, the discounted social welfare is given by:
	\begin{equation}	\label{eq:utility}
		\int_{0}^{\infty}e^{-\rho t}u(h(t))dt
	\end{equation}
	where $\rho>0$ is the social discount rate %
	and $u(h(t))$ is instantaneous social welfare. Let $u$ be a CRRA instantaneous social welfare function:
	\begin{equation}
		u(h) = \begin{cases}
					\frac{h^{1-\sigma}}{1-\sigma} & \text{if }\sigma\ne1 \\
					\ln h							  & \text{if }\sigma=1
			\end{cases}
	\end{equation}
	$\sigma>0$ is the coefficient of relative risk aversion and its inverse, $\sigma^{-1}$, is the elasticity of intertemporal substitution.

	An optimal harvest plan solves:
	\begin{equation}	\label{eq:general_problem}
		\begin{gathered}
			V(x_0)=\max_{h(t)\ge0}\int_{0}^\infty e^{-\rho t}u(h(t))dt \\
			\text{s.t.\ }\dot{x}(t)=f(x(t))-h(t) \\
			x(t)\ge0 \\
			\text{given }x(0)=x_0>0
		\end{gathered}
	\end{equation}
	where $x_0$ is the initial resource stock. The analysis of this otherwise standard problem is complicated by the discontinuity in $f$.
	
	The current value Hamiltonian for this problem is:
	\begin{equation}	\label{eq:hamiltonian}
		\mathcal{H}(x,h,\lambda)=u(h)+\lambda[f(x)-h]
	\end{equation}
	where $\lambda$ is the costate and represents the value of an infinitesimal increase in the resource stock, $x$. I will proceed to the analysis of \eqref{eq:hamiltonian} in \Cref{sec:solution}. 
	In discussing trajectories (optimal and otherwise), it will be useful to refer to the policy function analogue, $h(x)$, %
	that dictates the harvest rate when the resource stock is $x$.
	
	\subsection{Smooth recruitment}	\label{sec:continous-problem}
	
	Before proceeding, I briefly review the analysis for the simpler case where there is no tipping point and the recruitment function is continuous. In particular, consider the recruitment function, $A\tilde f(x)$ where $A>0$. The solution to \eqref{eq:hamiltonian} satisfies the following necessary conditions:
	\begin{eqnarray}
		&h^{-\sigma}=\lambda,	\label{eq:foc} \\
		&\dot\lambda=\lambda[\rho-A\tilde f'(x)],	\label{eq:lambda-lom} \\
		&\dot x=A\tilde f(x)-h.	\label{eq:x-lom}
	\end{eqnarray}
	
	Differentiating \eqref{eq:foc} with respect to $t$ and using \eqref{eq:lambda-lom} yields:
	\begin{equation}	\label{eq:h-lom}
		\dot h=\frac{1}{\sigma}h[A\tilde f'(x)-\rho].
	\end{equation}
	This implies that the optimal harvest rate is increasing over time when the margin\-al recruitment rate exceeds the social discount rate ($A\tilde f'(x)>\rho$) and declining when the marginal recruitment rate is less than the social discount rate ($A\tilde f'(x)<\rho$). A transversality condition,
	\begin{equation}	\label{eq:transversality-standard}
		\lim\limits_{t\rightarrow\infty}e^{-\rho t}\lambda(t)x(t)=0
	\end{equation}
	implies that the discounted value of the resource stock is zero in the limit and ensures that harvests are optimal over the entire time horizon.
	
	\begin{figure}
		\caption{Smooth recruitment optimal harvest}	\label{fig:no-tip}
		
		\begin{center}
			\subfloat{\includegraphics[scale=.4]{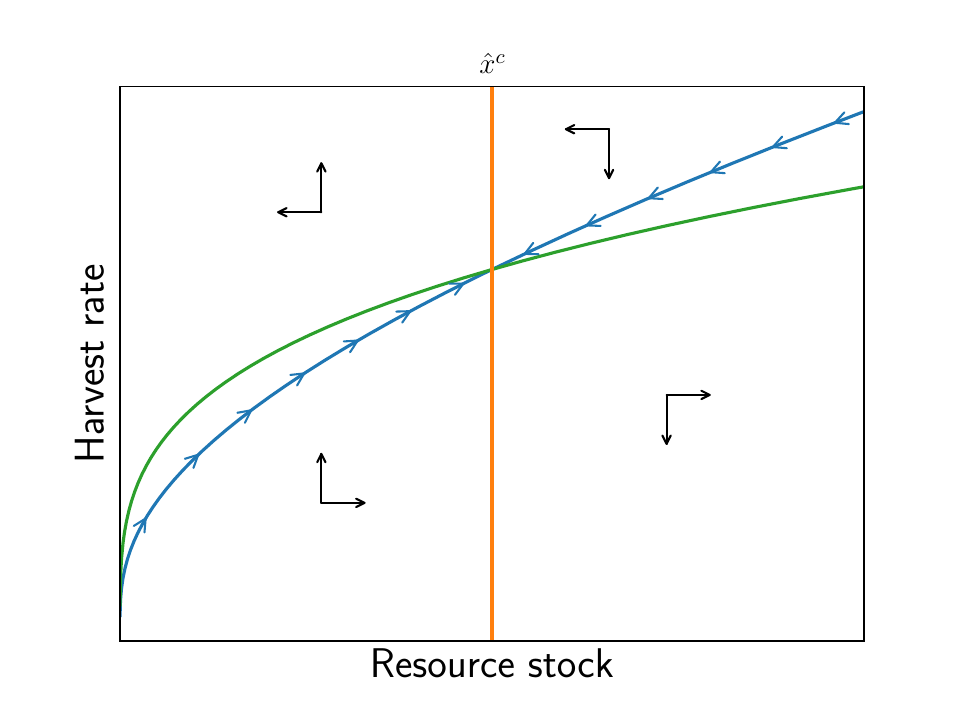}}
		\end{center}
	\end{figure}
	
	Together \eqref{eq:x-lom} and \eqref{eq:h-lom} represent an autonomous system of first-order differential equations that governs the model's dynamics.
	Since $\mathcal{H}$ is strictly concave in $(x,h)$, together with \eqref{eq:transversality-standard}, there is a unique solution, $(\hat x^c(t),\hat h^c(t))$, that converges to steady-state $(\hat x^c,\hat h^c)$ (\Cref{fig:no-tip}). Let $\hat h^c(x)$ and $V^c(x)$ denote the corresponding policy and value functions. The superscript $c$ denotes variables and functions associated with the continuous problem.
	
	\begin{remark*}
		Both here and subsequently, $(x(t),h(t))$ is a trajectory and $h(x)$ is a policy function. A ``hat'' denotes optimality and a ``hat'' variable with no functional argument is an optimal stationary point.
	\end{remark*}

	\subsection{Austerity}	\label{sec:austere}
	
	It will be useful for the analysis of tipping points to compare trajectories and policies to one another.  To do so, I define the notion of ``austerity.''
	
	Taking $A=\pi,1$, let $\underline f(\cdot)=\pi\tilde f(\cdot)$ and $\bar f(\cdot)=\tilde f(\cdot)$ be continuous  low- and high-fecundity, recruitment functions and denote the corresponding current value Hamiltonians as $\underline{\mathcal{H}}$ and $\overline{\mathcal{H}}$. Given $x_0$, let the unique, optimal trajectories be given by $(\hat{\underline x}(t),\hat{\underline h}(t))$ and $(\hat{\overline x}(t),\hat{\overline h}(t))$ for $t\ge0$. These trajectories converge to steady-states $(\hat{\underline x},\hat{\underline h})$ and $(\hat{\overline x},\hat{\overline h})$ and have corresponding policy functions $\hat{\underline h}(x)$ and $\hat{\overline h}(x)$ and value functions $\underline{V}(x)$ and $\overline{V}(x)$. Since $\pi<1$, it must be that $\hat{\underline h}(x)<\hat{\overline h}(x)$ and $\underline{V}(x)<\overline{V}(x)$. Call $(\hat{\underline x},\hat{\underline h})$ and $(\hat{\overline x},\hat{\overline h})$ the \textit{low} and \textit{high notional steady-states}.
	Define the ``standard'' policy function:
	\begin{equation}
		h^s(x)=\begin{cases}
			\hat{\underline h}(x) & \text{if }x<x_p \\
			\hat{\overline h}(x) & \text{if }x\ge x_p
		\end{cases}.
	\end{equation}
	
	I now define ``austerity.'' Loosely speaking, a trajectory $(x(t),h(t))$ is austere if it lies below the standard policy, $h^s(x)$.\footnote{
		In the renewable resource, hazard model literature, a reduced harvest policy is called ``precautionary'' because it reduces the likelihood of tipping \citep{polasky:optimal,zeeuw:managing}. In my model, since there is no uncertainty, reduced harvests cannot be ``precautionary'' and a more appropriate term is for harvests to be ``austere.'' Austere harvests can be employed to either prevent downward tipping or induce upward tipping. Unlike ``precautionary,'' ``austere'' is equally applicable to models with and without uncertainty.
	} To be precise:
	\begin{definition}
		Harvest policy $h(x)$ is \textbf{austere} relative to $h^0(x)$ if $h(x)\le h^0(x)$ and there is $x'<x''$ such that when $x\in [x',x'')$, $h(x)<h^0(x)$.
	\end{definition}
	\begin{definition}
		Trajectory $(x(t), h(t))$ is \textbf{austere} relative to harvest policy $h^0(x)$ if the corresponding harvest policy, $h(x)$, defined over the domain $[\inf\{x(t)\}_{t=0}^\infty,$ $\sup\{x(t)\}_{t=0}^\infty]$ is austere relative to $h^0(x)$ over the same domain.
	\end{definition}
	\begin{definition}
		A harvest policy $h(x)$ or a trajectory $(x(t),h(t))$ is \textbf{austere} if it is austere relative to $h^s(x)$.
	\end{definition}

	We will see that under the tipping recruitment function, $f(x)$, the optimal trajectory, $(\hat x(t),\hat h(t))$, may be austere.%

	\section{Optimal harvest}	\label{sec:solution}
	
	I solve the discontinuous renewable resource problem by construction.  I begin with the solution to the problem where $x_0\ge x_p$ and assume that the resource stock is constrained to remain at or above the tipping point. %
	The solution to this problem will yield a constrained optimal path, $(\hat x^*(t),\hat h^*(t))$, that converges to $(\hat x^*,\hat h^*)$.\footnote{
		The superscript *, here and subsequently, denotes variables and functions associated with the constrained, high-fecundity problem. Similarly, a subscript * denotes variables and functions associated with the subsequent low-fecundity problem.
	} For $x\ge x_p$, the corresponding harvest policy and value functions are $\hat h^*(x)$ and $V^*(x)$.
	
	Next, I solve the problem for $x_0<x_p$, allowing for the possibility that the optimal trajectory may transition to the high-fecundity recruitment function. For a given initial resource stock, $x_0$, there are two possible solutions: i)~the optimal trajectory, $(\hat x_{*t},\hat h_{*t})$, converges to the low notional stationary point, $(\hat x_*,\hat h_*)=(\hat{\underline x},\hat{\underline h})$ or ii) the optimal trajectory reaches the tipping point so that $\hat x_*(T)=x_p$ at time $T$ with terminal value $e^{-\rho T}V^*(x_p)$, under the assumption that harvests thereafter follow the constrained, high-fecundity solution that converges to $(\hat x^*,\hat h^*)$. %
	
	Finally, given these solutions, I show that when high-fecundity is sufficiently valuable, the constrained, high-fecundity solution is still optimal when $x(t)$ is unconstrained. Consequently, the solution to the low-fecundity problem is also optimal.
	
	\subsection{Constrained high-fecundity problem}	\label{sec:constrained}
	
	Consider the problem where the resource stock is constrained to stay at or above the tipping point:
	\begin{equation}	\label{eq:constrained-upper-problem}
		\begin{gathered}
			V^*(x_0)=\max_{h(t)\ge0}\int_{0}^\infty e^{-\rho t}u(h(t))dt \\
			\text{s.t.\ }\dot{x}(t)=\tilde f(x(t))-h(t) \\
			x(t)\ge x_p \\
			\text{given }x(0)=x_0\ge x_p.
		\end{gathered}
	\end{equation}
	This is problem \eqref{eq:general_problem} for $x_0\ge x_p$ where there is the additional constraint that $x(t)\ge x_p$ for all $t\ge0$.
	
	\begin{figure}
		\caption{Constrained high-fecundity problem}	\label{fig:constrained-upper}
		
		\begin{center}
			\subfloat[Interior\label{fig:constrained-upper-interior}]{\includegraphics[scale=.4]{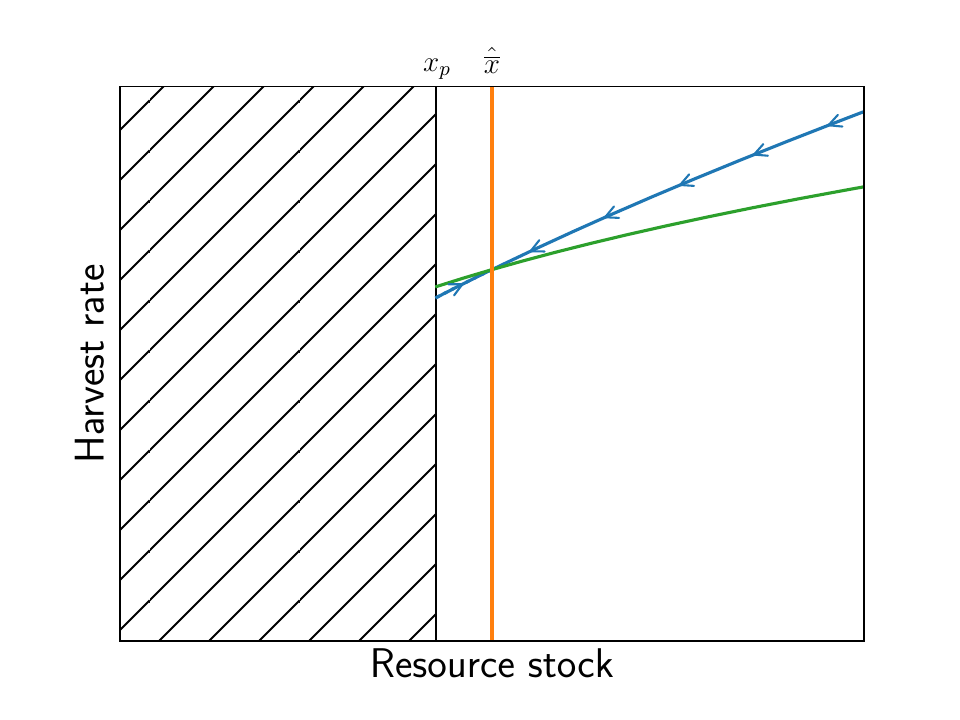}}~
			\subfloat[Boundary\label{fig:constrained-upper-boundary}]{\includegraphics[scale=.4]{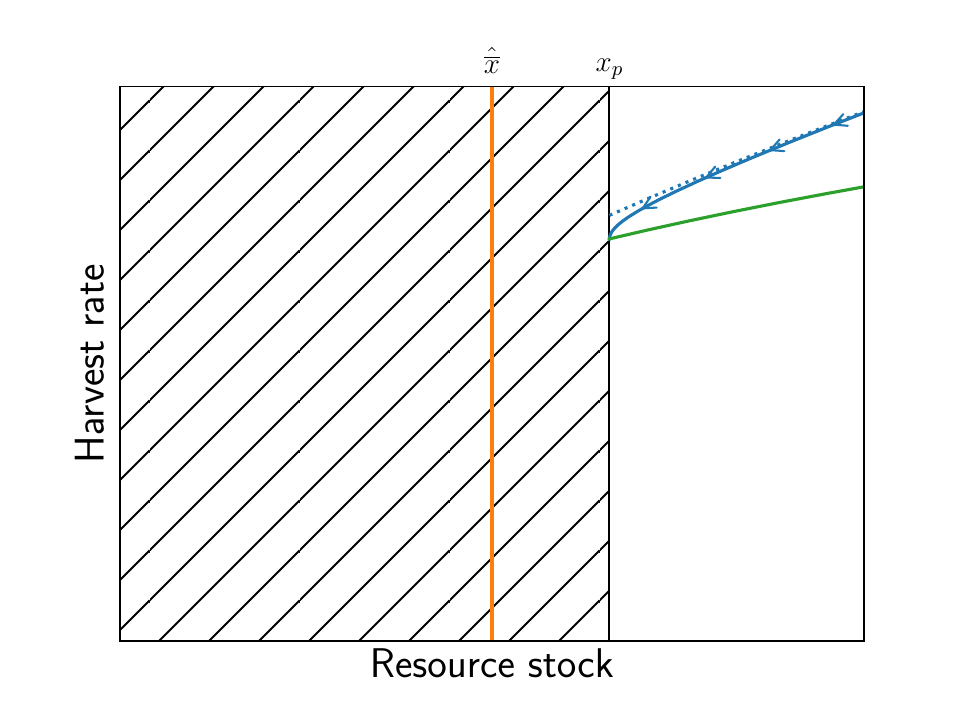}}
		\end{center}
	\end{figure}

	\begin{proposition}	\label{prop:high-fecund}
		For the high-fecundity problem given by \eqref{eq:constrained-upper-problem}, the optimal trajectory, $(\hat x^*(t),\hat h^*(t))$ for all $t\ge0$, is unique and:
		\begin{enumerate}[i)]
			\item if $\hat{\overline x}\ge x_p$ then $(\hat x^*(t),\hat h^*(t))=(\hat{\overline x}(t),\hat{\overline h}(t))$ and $\hat x^*=\hat{\overline x}$,
			\item if $\hat{\overline x}<x_p$ then $(\hat x^*(t),\hat h^*(t))$ is austere and there is some time $\tau<\infty$ such that for all $t\ge\tau$, $\hat x^*(t)=x_p$.
		\end{enumerate}
	\end{proposition}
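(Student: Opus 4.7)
My plan is to apply Pontryagin's maximum principle with the pure state constraint $x_t \ge x_p$ and to get uniqueness from strict concavity of $\mathcal{H}$ in $(x,h)$ together with the transversality condition \eqref{eq:transversality-standard}. The two cases split according to whether the state constraint is slack or binding along the relevant trajectory; the pivot is that concavity of $\tilde f$ together with $\tilde f'(\hat{\overline x})=\rho$ pins down the unconstrained high-fecundity stationary point relative to $x_p$.

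Part (i) reduces to a feasibility check. The unconstrained high-fecundity dynamics drive $x_t$ monotonically to $\hat{\overline x}$, so starting from $x_0\ge x_p$ with $\hat{\overline x}\ge x_p$ yields $x_t\ge \min(x_0,\hat{\overline x})\ge x_p$ for all $t$; the constraint never binds. Hence $(x_t^s,h_t^s)$ is already feasible for \eqref{eq:constrained-upper-problem}, and since the constrained problem maximizes over a subset of the unconstrained feasible set while achieving the same value along this candidate, the two problems must share the same solution, giving $(\hat x_t^*,\hat h_t^*)=(x_t^s,h_t^s)$ and $\hat x^*=\hat{\overline x}$.

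For part (ii), with $\hat{\overline x}<x_p$, the unconstrained trajectory would drive $x_t$ strictly below $x_p$, so the constraint must bind eventually. I would construct the solution as the concatenation of an interior arc on which \eqref{eq:foc}--\eqref{eq:h-lom} hold and a boundary arc on which $x_t=x_p$ and $h_t=\tilde f(x_p)$, joined at an entry time $\tau$. The interior arc is pinned down by $x_0$ together with continuity of the costate at the junction, which forces $h_\tau=\tilde f(x_p)$. Austerity follows from a shadow-price comparison: restricting the feasible set makes in-situ stock more valuable, so the costate $\lambda_t$ along the constrained path exceeds its unconstrained counterpart at each state in $[x_p,x_0]$; since $h=\lambda^{-1/\sigma}$, the constrained harvest lies below $\hat{\overline h}(x)=h^s(x)$ on this range, with strict inequality near $x_p$ where the constrained harvest equals $\tilde f(x_p)<\hat{\overline h}(x_p)$.

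The main technical obstacle I anticipate is showing that $\tau$ is finite rather than an asymptotic limit. I would argue by contradiction: if $x_t>x_p$ for all $t$ with $x_t\to x_p$, then $\dot x_t\to 0$ forces $h_t\to\tilde f(x_p)$, and any limit of an optimal trajectory under the autonomous interior system \eqref{eq:x-lom}--\eqref{eq:h-lom} must itself be a stationary point of that system. The unique interior stationary point, however, is $(\hat{\overline x},\tilde f(\hat{\overline x}))$ with $\hat{\overline x}<x_p$, contradicting $x_t\to x_p$. Equivalently, at such a limit \eqref{eq:h-lom} would require $\tilde f'(x_p)=\rho$, which is incompatible with $\tilde f'(x_p)<\tilde f'(\hat{\overline x})=\rho$ by concavity. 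Hence $\tau<\infty$ and $\hat x_t^*=x_p$ for all $t\ge\tau$, completing part (ii).
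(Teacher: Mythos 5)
Your proposal is correct in substance and follows the same overall strategy as the paper: split on whether the constraint binds, observe in case (i) that the unconstrained optimum is feasible for the constrained problem, and in case (ii) build the solution as an interior arc that lands on the boundary arc $(x_p,\tilde f(x_p))$ at a finite entry time with $h_\tau=\tilde f(x_p)$. The differences are in how the two key steps are justified. For the junction condition, the paper reformulates case (ii) as a free-terminal-time problem with scrap value $e^{-\rho T}u(\tilde f(x_p))/\rho$ and uses the Hamiltonian transversality condition \eqref{eq:fixed-terminal-value-transversality-upper} together with \Cref{lem:hamiltonian-slope} to conclude $h_T=\tilde f(x_p)$ is the \emph{unique} admissible junction value, which is also what delivers uniqueness of the trajectory; you instead invoke continuity of the costate at the junction. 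That is the one soft spot in your argument: for pure state constraints the costate may jump at junction times, so continuity there is something to be established (e.g., via exactly the Hamiltonian-continuity/free-time condition the paper uses), not assumed. For austerity, the paper simply compares the two trajectories at $x=x_p$, where $h^s(x_p)>\tilde f(x_p)=\lim_{t\to\tau}\hat h_t^*$; your shadow-price monotonicity argument reaches the same conclusion on all of $[x_p,x_0]$ and could be made rigorous by noting that two distinct solutions of the autonomous system \eqref{eq:x-lom}--\eqref{eq:h-lom} cannot cross in the $(x,h)$ plane. Finally, your explicit contradiction argument that $\tau<\infty$ (a limit point of the interior flow must be a rest point, but $\tilde f'(x_p)<\rho$ rules out $(x_p,\tilde f(x_p))$ as one) is a worthwhile addition that the paper leaves implicit.
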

	{\parindent0pt The proof of \Cref{prop:high-fecund} and all subsequent proofs can be found in Appendix~\ref{appendix:proofs}.}

	When the high notional stationary point is at least the tipping point ($\hat{\overline x}\ge x_p$), the constraint is non-binding and the optimal policy corresponds to the standard policy and the constrained optimal stationary point corresponds to the high notional stationary point defined in \Cref{sec:austere} (\Cref{fig:constrained-upper-interior}).
	
	But if the high notional stationary point is below the tipping point ($\hat{\overline x}<x_p$), the constraint is strictly binding and the optimal harvest is austere with the resource stock falling and stopping at $x_\tau=x_p$ at time $\tau$ (\Cref{fig:constrained-upper-boundary}). To see this, when the harvest trajectory is not sufficiently austere, the resource stock reaches the tipping point too quickly. On the other hand, when the harvest trajectory is too austere, the trajectory crosses the $\dot x=0$ line and is suboptimal since %
	harvesting $\tilde f(x_p)$ is always feasible.

	\subsection{Low-fecundity problem}	\label{sec:low-fecund}
	
	Now consider the low fecundity problem where $x_0<x_p$. Assume that if a trajectory, $(x(t),h(t))$, reaches the tipping point $x_p$ at some time $T$ then the planner gets ``terminal payoff,'' $e^{-\rho T}V^*(x_p)$. Beyond time $T$, the trajectory is assumed to follow the solution from \Cref{sec:constrained}.
	
	While it is always feasible for the resource stock to reach $x_p$, it may not be optimal. Thus there are two candidate outcomes. In one outcome, high-fecundity is not attained and the resource stock converges to the low notional steady-state, $\hat{\underline x}$. In the second outcome, the resource stock increases until it reaches the tipping point, $x_p$, whereupon recruitment becomes high-fecundity at time $T$.
	
	The optimization problem for the latter type of outcome is:
	\begin{equation}	\label{eq:low-fecund-free-T-problem}
		\begin{gathered}
			V_2(x_0)=\max_{h(t)\ge0}\int_0^T e^{-\rho t}u(h(t))dt+e^{-\rho T}V^*(x_p) \\
			\text{s.t. }\dot{x}(t)=\pi\tilde f(x(t))-h(t)\\
			x(t)\ge0 \\
			x(T)=x_p  \\
			T\text{ free} \\
			\text{given }x_0<x_p.
		\end{gathered}
	\end{equation}
	\sloppy This is a control problem with fixed terminal point, $x_p$, ``scrap value,'' $e^{-\rho T}V^*(x_p)$ and free terminal time, $T$.

	\fussy Either type of outcome must solve the current value Hamiltonian \eqref{eq:hamiltonian} so that both must satisfy \eqref{eq:x-lom} and \eqref{eq:h-lom} where $f(x)=\pi\tilde f(x)$ and $f'(x)=\pi\tilde f'(x)$. In addition, optimal trajectories must satisfy the appropriate transversality conditions. For the outcome that converges to the low notional stationary point, this is the standard transversality condition~\eqref{eq:transversality-standard}. For the outcome that transitions to high-fecundity at time $T$, the transversality condition is:
	\begin{equation}	\label{eq:fixed-terminal-value-transversality-lower}
		\lim_{t\rightarrow T}\underline{\mathcal{H}}(x(t),h(t),\lambda(t))=\rho V^*(x_p).
	\end{equation}
	This has the intuitive interpretation that as $t\rightarrow T$, the flow value of trajectory $(x(t),h(t))$, as represented by the current value Hamiltonian, must be equal to the flow value of the terminal payoff, $\rho V^*(x_p)$. Since $h(t)<\pi\tilde f(x(t))$, \Cref{lem:hamiltonian-slope} from the Appendix shows that $\underline{\mathcal H}(x,h,u'(h))$ is decreasing in $h$ and implies that if $\lim_{t\rightarrow T}\underline{\mathcal{H}}(x(t),h(t),\lambda(t))>\rho V^*(x_p)$ then high-fecundity is being attained too quickly so that $(x(t),h(t))$ is overly austere and larger harvests would be welfare improving. Conversely, if $\lim_{t\rightarrow T}\underline{\mathcal{H}}(x(t),h(t),\lambda(t))<\rho V^*(x_p)$ then the transition to high-fecundity is too slow and $(x(t),h(t))$ is insufficiently austere so that smaller harvests are optimal.
	
	The overall solution to the low-fecundity problem will have an endogenous tipping (Skiba) point, below which the standard trajectory obtains and above which an austere trajectory reaching $x_p$ and high fecundity is attained.
	
	\begin{proposition}	\label{prop:low-fecund}
		For the low-fecundity problem, if $\pi$, $\rho$ and $x_p$ are sufficiently small then the optimal trajectory, $(\hat x_*(t),\hat h_*(t))$ for all $t\ge0$, is unique and there exists $x_p'\in[0,x_p)$ such that
		\begin{enumerate}[i)]
			\item if $x_0<x_p'$ then  $(\hat x_*(t),\hat h_*(t))=(\hat{\underline x}(t),\hat{\underline h}(t))$ and $\hat x_*=\hat{\underline x}$,
			\item if $x_0\ge x_p'$ then $(\hat x_*(t),\hat h_*(t))$ is austere and there is some time $\tau<\infty$ such that $\hat x_*(\tau)=x_p$.
		\end{enumerate} 
	\end{proposition}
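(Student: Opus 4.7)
The plan is to construct two candidate solutions to \Cref{eq:general_problem} for $x_0<x_p$---the ``standard'' candidate converging to $\hat{\underline x}$ and the ``austere'' candidate reaching $x_p$ in finite time---and then identify the threshold $x_p'$ at which the welfare ordering between them switches. Both candidates satisfy the canonical system \Cref{eq:x-lom}--\Cref{eq:h-lom} with $f=\pi\tilde f$; they differ only in transversality. The standard candidate uses \Cref{eq:transversality-standard} and coincides with $(x_t^s,h_t^s)$ on $[0,x_p)$, converging to $\hat{\underline x}$. The austere candidate solves \Cref{eq:low-fecund-free-T-problem} with fixed endpoint $x_T=x_p$ and free-terminal-time transversality \Cref{eq:fixed-terminal-value-transversality-lower}. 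The hypothesis $\hat{\underline h}(x_p)\le\tilde f(x_p)$ is a feasibility condition that I would take as given.

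For each $x_0\in[0,x_p)$ I would construct the austere candidate by a shooting argument: the fixed endpoint and \Cref{eq:fixed-terminal-value-transversality-lower} together pin down a unique initial costate, hence a unique initial harvest $h^a(x_0)$. Existence and uniqueness rest on \Cref{lem:hamiltonian-slope}, which shows $\underline{\mathcal H}(x,h,u'(h))$ is strictly monotone in $h$ along subharvest trajectories: raising the initial harvest accelerates arrival at $x_p$ and leaves the terminal Hamiltonian strictly above $\rho V^*(x_p)$, while lowering it produces the reverse strict inequality. I would then verify that this trajectory is austere relative to $h^s$ throughout its range $[x_0,x_p]$. On $[\hat{\underline x},x_p)$ the requirement $\dot x>0$ forces $h<\pi\tilde f(x)<\hat{\underline h}(x)$. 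On $[0,\hat{\underline x})$ (when $x_0<\hat{\underline x}$) the austere trajectory passes through $\hat{\underline x}$ with strictly positive velocity while the standard only approaches $\hat{\underline x}$ asymptotically, so $h^a(\hat{\underline x})<\hat{\underline h}(\hat{\underline x})$; ODE uniqueness prevents the two integral curves from crossing on $[0,\hat{\underline x})$, extending the strict inequality throughout.

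Let $V_2(x_0)$ denote the welfare of the austere candidate and $\underline V(x_0)$ that of the standard. The envelope theorem, combined with the first-order condition $\lambda=h^{-\sigma}$, gives $\underline V'(x_0)=[\hat{\underline h}(x_0)]^{-\sigma}$ and $V_2'(x_0)=[h^a(x_0)]^{-\sigma}$. By austerity, $h^a(x_0)<\hat{\underline h}(x_0)$, so $\Delta(x_0):=V_2(x_0)-\underline V(x_0)$ is strictly increasing on $[0,x_p)$. At the right endpoint, the austere candidate collapses to an instantaneous transition, so $V_2(x_p^-)=V^*(x_p)$; and $V^*(x_p)>\underline V(x_p)$ because the incremental value of high fecundity is strictly positive, so $\Delta(x_p^-)>0$. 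Setting $x_p':=\inf\{x_0\in[0,x_p):\Delta(x_0)\ge 0\}$, strict monotonicity yields $\Delta<0$ on $[0,x_p')$ and $\Delta>0$ on $(x_p',x_p)$, which delivers the two-case dichotomy and uniqueness of the optimal trajectory for $x_0\ne x_p'$.

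The principal obstacle is the single-crossing step, which rests on two subsidiary claims. First, establishing austerity on $[0,\hat{\underline x})$ requires the ODE-uniqueness argument above rather than a merely local comparison at $\hat{\underline x}$. Second, the envelope formula $V_2'(x_0)=[h^a(x_0)]^{-\sigma}$ for the free-terminal-time problem requires a careful variation argument: perturbations in $T$ contribute a boundary term proportional to $\underline{\mathcal H}-\rho V^*(x_p)$, which vanishes exactly by \Cref{eq:fixed-terminal-value-transversality-lower}, so the envelope calculation reduces to the interior costate at $t=0$ as expected.
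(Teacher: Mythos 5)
Your architecture is essentially the paper's: two extremal candidates distinguished by their transversality conditions, austerity of the second, and single crossing of the value functions via $V_i'(x)=u'(h_i(x))$ (your envelope computation is the same fact the paper extracts by differentiating the HJB identity $\rho V_i(x)=\underline{\mathcal H}(x,h_i(x),u'(h_i(x)))$ along each policy). But there are two genuine gaps.

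First, you never use the hypothesis $\hat{\underline h}(x_p)\le\tilde f(x_p)$, and the step where it is indispensable is exactly the one you wave through: $V_2(x_p^-)=V^*(x_p)>\underline V(x_p)$ ``because the incremental value of high fecundity is strictly positive.'' That inequality is not automatic. $V^*$ is the value of the \emph{constrained} problem ($x_t\ge x_p$ under $\tilde f$), while $\underline V$ is the \emph{unconstrained} value under $\pi\tilde f$. When $\hat{\overline x}$ is far below $x_p$, the constrained planner is pinned at the boundary with flow payoff $u(\tilde f(x_p))$; if $\hat{\underline h}(x_p)>\tilde f(x_p)$, the low-fecundity planner prefers to harvest more and draw the stock down, so $\underline V(x_p)$ can exceed $u(\tilde f(x_p))/\rho$ and, for $\pi$ near $1$, exceed $V^*(x_p)$. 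The paper closes this with the chain $V^*(x_p)\ge u(\tilde f(x_p))/\rho\ge u(\hat{\underline h}(x_p))/\rho>\underline V(x_p)$, whose middle inequality \emph{is} the hypothesis and whose first inequality is Lemma~\ref{lem:high-fecund-v-bound}. Without this your $\Delta(x_p^-)>0$ is unproven.

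Second, your shooting construction presumes the austere candidate exists for every $x_0\in[0,x_p)$. It need not: the transversality condition pins down a single terminal pair $(x_p,h_T^-)$, and the backward integral curve from it can terminate on the $\dot x=0$ locus at some $x'>0$ with $\lim_{x\downarrow x'}h_2(x)=\pi\tilde f(x')$, so that no extremal reaching $x_p$ exists for $x_0\le x'$ and $\Delta$ is undefined there. The dichotomy survives (the standard candidate wins by default below $x'$), but your monotone-$\Delta$ argument only runs on $(x',x_p)$, and you still need a boundary comparison at $x'$ to locate the crossing: the paper supplies it by showing $\underline{\mathcal H}(x',\pi\tilde f(x'),u'(\pi\tilde f(x')))\le\underline{\mathcal H}(x',h_1(x'),u'(h_1(x')))$, i.e.\ $V_2\le V_1$ at $x'$, which places $x_p'$ in $[x',x_p)$. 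Your non-crossing argument for austerity below $\hat{\underline x}$ is fine, but it only applies in the case $x'=0$, where the austere branch actually extends that far.
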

	
	When $\pi$, $\rho$ and $x_p$ are small, the discounted value of high-fecundity is relatively high and at the margin, the planner prefers high-fecundity to low-fecundity. If $\pi$, $\rho$ and $x_p$ are too large, the planner prefers low-fecundity and will not implement an austere harvest policy.

	If $\pi$, $\rho$ and $x_p$ are sufficiently low, then there is an endogenous tipping point, $x_p'$ (possibly trivial with $x_p'=0$). Above $x_p'$, the value of the terminal payoff, $e^{-\rho T}V^*(x_p)$, is relatively high and austere harvests can optimally attain high fecundity. Below $x_p'$, austerity is too costly and the planner employs the standard policy which converges to the low notional steady-state, $\hat{\underline x}$.
	
	\subsection{Unconstrained optimality}	\label{sec:unconstrained}
	
	Now consider the full problem where the resource stock is only constrained to be non-negative.	In particular, for $x_0\ge x_p$, an unconstrained trajectory can have $x(t)<x_p$ for some $t>0$. Let $(\hat x(t),\hat h(t))$ be the trajectory that solves this problem with corresponding policy function, $\hat h(x)$, and value function, $V(x)$.

	\begin{proposition}	\label{prop:full}
		For the unconstrained problem, if $\pi$, $\rho$ and $x_p$ are sufficiently small then the optimal trajectory, $(\hat x(t),\hat h(t))$ for all $t\ge0$, is unique and there exists $x_p'\in[0,x_p)$ such that
		\begin{enumerate}[i)]
			\item if $x_0\in(0,x_p')$ then the optimal trajectory is $(\hat x(t),\hat h(t))=(\hat{\underline x}(t),\hat{\underline h}(t))$ and $\hat x=\hat{\underline x}$,
			\item if $x_0\in[x_p',x_p)$ then the optimal trajectory $(\hat x(t),\hat h(t))$ is austere and there exists $\tau<\infty$ such that $\hat x_\tau=x_p$ and $\hat x=\max\{\hat{\overline x},x_p\}$,
			
			\item if $x_0,\hat{\overline x}\ge x_p$ then $(\hat x(t),\hat h(t))=(\hat{\overline x}(t),\hat{\overline h}(t))$ and $\hat x=\hat{\overline x}$,
			\item if $x_0\ge x_p>\hat{\overline x}$ then $(\hat x(t),\hat h(t))$ is austere and there exists $\tau<\infty$ such that $\hat x(t)=x_p$ for $t\ge\tau$.
		\end{enumerate}
	\end{proposition}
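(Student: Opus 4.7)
The plan is to extend the constrained analyses of \Cref{prop:high-fecund} and \Cref{prop:low-fecund} to the unconstrained problem by establishing the boundary identity
\[
V(x_p)=V^*(x_p).
\]
The sufficient conditions on $\pi$, $\rho$, and $x_p$ are exactly those invoked by \Cref{prop:bound-low-policy} to guarantee $\hat{\underline h}(x_p)\le\tilde f(x_p)$, so that \Cref{prop:low-fecund} applies.

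First I would prove the boundary identity. The inequality $V(x_p)\ge V^*(x_p)$ is immediate from feasibility, since the constrained high-fecundity trajectory from $x_p$ is admissible in the unconstrained problem. For the reverse, let $(x_t,h_t)$ be any feasible unconstrained trajectory starting at $x_p$. If $x_t\ge x_p$ for all $t$, its value is at most $V^*(x_p)$ by \Cref{prop:high-fecund}. Otherwise the trajectory enters $[0,x_p)$, and its value is bounded by $\limsup_{x_0\uparrow x_p}V_*(x_0)$, where $V_*(x_0)=\max\{V_2(x_0),\underline V(x_0)\}$ is the optimal value for the low-fecundity problem of \Cref{prop:low-fecund}. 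This limit equals $V^*(x_p)$: the austere branch yields $V_2(x_0)\to V^*(x_p)$ as its terminal time $T$ collapses to zero, while $\underline V(x_p)\le V^*(x_p)$ under the sufficient conditions, because small $\pi$ makes the low-fecundity notional value small relative to the high-fecundity value at the tipping point.

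With the boundary identity in hand, cases (iii) and (iv) follow by a crossing argument. For any feasible unconstrained trajectory from $x_0\ge x_p$, either it stays weakly above $x_p$ for all time (value bounded by $V^*(x_0)$ via \Cref{prop:high-fecund}) or it first crosses below $x_p$ at some time $\sigma>0$ with $x_\sigma=x_p$. In the latter case, the concatenation of $(x_t,h_t)$ on $[0,\sigma]$ with the constrained optimum from $x_\sigma$ is constrained-feasible and thus satisfies $\int_0^\sigma e^{-\rho t}u(h_t)\,dt+e^{-\rho\sigma}V^*(x_p)\le V^*(x_0)$. The original trajectory's total value is at most $\int_0^\sigma e^{-\rho t}u(h_t)\,dt+e^{-\rho\sigma}V(x_p)$, and invoking $V(x_p)=V^*(x_p)$ gives value $\le V^*(x_0)$. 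Combined with feasibility, $V(x_0)=V^*(x_0)$, and the characterizations (iii) and (iv) are inherited directly from \Cref{prop:high-fecund}. For cases (i) and (ii) with $x_0<x_p$, the boundary identity confirms that the scrap value $V^*(x_p)$ used in the free-terminal-time problem of \Cref{prop:low-fecund} is the correct unconstrained continuation value, so the trajectories characterized there (including the endogenous Skiba point $x_p'$) remain unconstrained-optimal.

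The main obstacle will be the reverse direction of the boundary identity, and in particular ruling out trajectories that oscillate across $x_p$ several times. A rigorous treatment likely applies the bound recursively at each successive crossing, relying on the fact that each round trip incurs a strict loss relative to the constrained high-fecundity trajectory under the sufficient conditions; an equivalent route is a viscosity-solution verification of the HJB equation with discontinuous recruitment $f$, checking that the piecewise value function built from $V^*$ above $x_p$ and $V_*$ below it is the unique solution.
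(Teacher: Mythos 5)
Your overall architecture --- reduce everything to the boundary identity $V(x_p)=V^*(x_p)$, then split any unconstrained trajectory at its first crossing of $x_p$ --- is a legitimate dynamic-programming route and genuinely differs from the paper, which argues directly on trajectories satisfying the Pontryagin necessary conditions. But the crux of your argument, the reverse inequality $V(x_p)\le V^*(x_p)$, is where all the content lies, and as written it is circular. You bound the continuation value of any excursion into $[0,x_p)$ by $V_*$, but $V_*$ is defined in \Cref{prop:low-fecund} with scrap value $e^{-\rho T}V^*(x_p)$ upon return to $x_p$, i.e., under the standing assumption that the continuation from $x_p$ is the \emph{constrained} optimum. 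Using $V_*$ as an upper bound on unconstrained continuations therefore presupposes exactly the inequality $V(x_p)\le V^*(x_p)$ that you are trying to prove. You flag the multiple-crossing problem yourself, but the proposed fixes (recursion over crossings, or a viscosity-solution verification) are not carried out, and the recursion would need a quantitative input --- that the average flow payoff over any excursion below $x_p$ is at most $\rho V^*(x_p)$ --- which you have not supplied and which does not follow from anything you establish.

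The paper closes this gap differently: it restricts attention to candidate optima satisfying \eqref{eq:x-lom} and \eqref{eq:h-lom}, and observes that any such alternative that reaches $x_p$ with $h>\tilde f(x_p)$ and continues below must have $\dot x<0$ and $\dot h<0$ thereafter, so to satisfy transversality it can only be the standard low-fecundity trajectory converging to $\hat{\underline x}$; oscillation across $x_p$ simply does not occur among candidates. The value comparison is then a direct three-piece estimate: the constrained optimum weakly dominates the deviation on $[0,T)$ by constrained optimality, and strictly dominates on $[T,\infty)$ because the deviant's harvests satisfy $\hat{\underline h}_t<\hat{\underline h}(x_p)\le\tilde f(x_p)$, the last inequality being exactly where \Cref{prop:bound-low-policy} enters. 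If you want to keep your value-function framing, you must import this monotonicity/transversality argument to rule out repeated crossings and to get an unconditional bound on continuations below $x_p$; without it the boundary identity is asserted rather than proved.
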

	
	\begin{figure}
		\caption{Examples with and without a high fecundity steady state}	\label{fig:no-upper}
		
		\subfloat[Large $\pi$, $x_p$ \label{fig:no-upper-pi}]{\includegraphics[scale=.4]{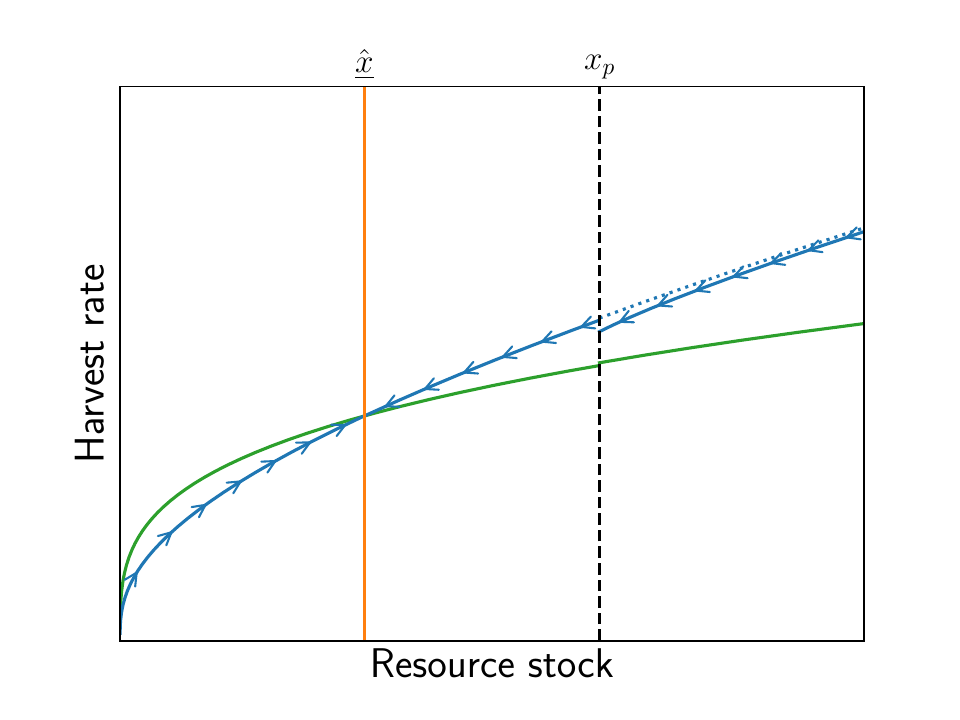}}~~
		\subfloat[Large $\rho$, $x_p$ \label{fig:no-upper-rho}]{\includegraphics[scale=.4]{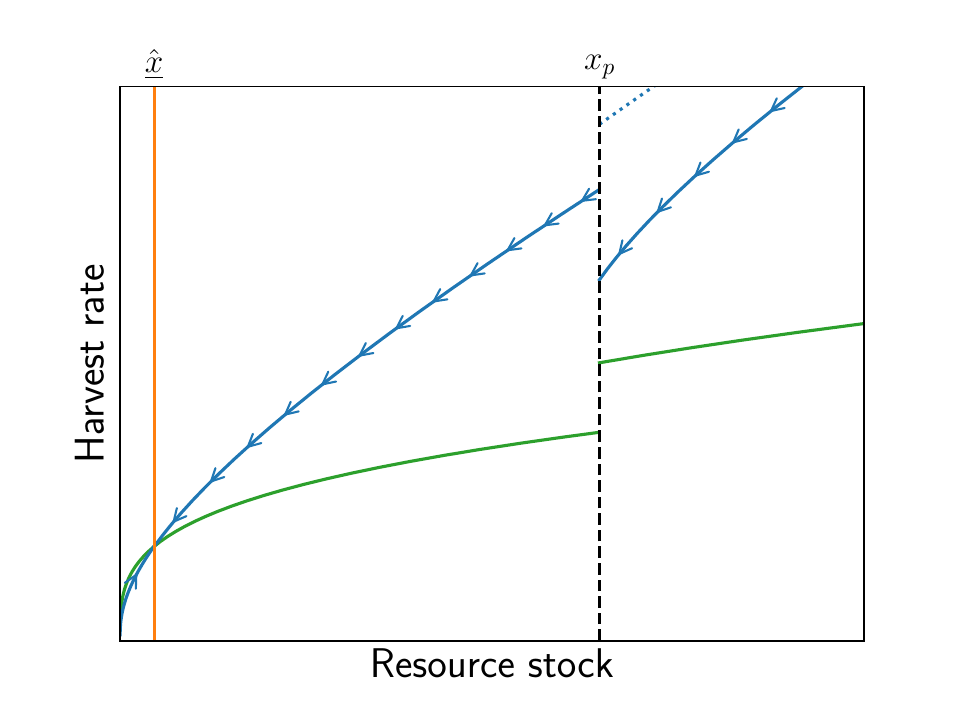}}
		
		\centering\subfloat[Large $\pi$, small $x_p$ \label{fig:upper}]{\includegraphics[scale=.4]{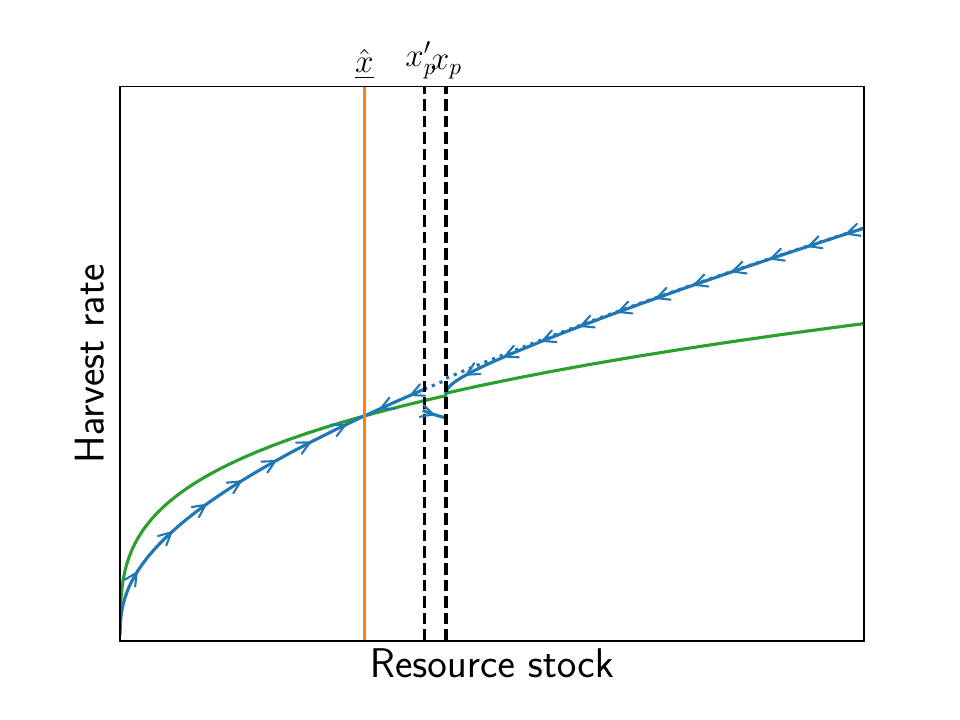}}
	\end{figure}
	
	When $\pi$, $\rho$ and $x_p$ are too large, there is no high-fecundity steady-state (\Cref{fig:no-upper-pi,fig:no-upper-rho}). Since the high notional steady-state is always preferable to the low notional steady-state, when these conditions fail, it must be the case that the high notional steady-state is infeasible and $\hat{\overline x}<x_p$. Examples of the failure of \Cref{prop:full} are illustrated in \Cref{fig:no-upper-pi} where $\pi$ is large and in \Cref{fig:no-upper-rho} where $\rho$ is large. Notice that at high-fecundity ($x_0>x_p$), optimal harvests are austere in order to prolong high-fecundity prior to the eventual transition to low-fecundity. Finally, if $x_p$ is not too large, the high notional steady-state can be restored (in \Cref{fig:upper}, $\pi$ and $\rho$ have the same values as in \Cref{fig:no-upper-pi} but $x_p$ is smaller). The formal analysis of this case would first solve the low-fecundity problem under the assumption that the resource stock can never exceed the tipping point ($x(t)<x_p$). Then, taking this solution as given, solve the high-fecundity problem where $x_0\ge x_p$ where the planner may choose to tip the ecosystem at some time $T$, attaining continuation value $e^{-\rho T}V_*(x_p)$. Aside from this basic sketch, I do not formally analyze the case where there is no high-fecundity steady-state.

	When $\pi$, $\rho$ and $x_p$ are small, the discounted value of high-fecundity is high and the planner prefers high-fecundity to low-fecundity. Above the tipping point ($x_0\ge x_p$) there is a stable, high-fecundity steady-state. When $\hat{\overline x}>x_p$, the tipping point is strictly non-binding so that $(\hat x(t),\hat h(t))=(\hat{\overline x}(t),\hat{\overline h}(t))$ and the ecosystem converges to the notional, high-fecundity stationary point ($\hat x=\hat{\overline x}$) (\Cref{fig:low-low,fig:low-med,fig:low-high,fig:med-low,fig:med-med,fig:med-high}). But if $\hat{\overline x}<x_p$, then the high-fecundity steady-state occurs at the tipping point and $\hat x=x_p$. In order to reach this steady-state optimally, harvests must be austere; the standard harvest policy brings the resource stock to the tipping point too quickly. Even though the stationary resource stock $\hat x=x_p$ is stable, a small perturbation can result in a large fall in both harvests and recruitment (\Cref{fig:high-low,fig:high-med,fig:high-high}).
	
	\begin{figure}
		\caption{Optimal harvest policies} \label{fig:optimal}
		\centering 
		
		\subfloat[Low Tipping Point, Low $\pi$\label{fig:low-low}]{\includegraphics[scale=.4]{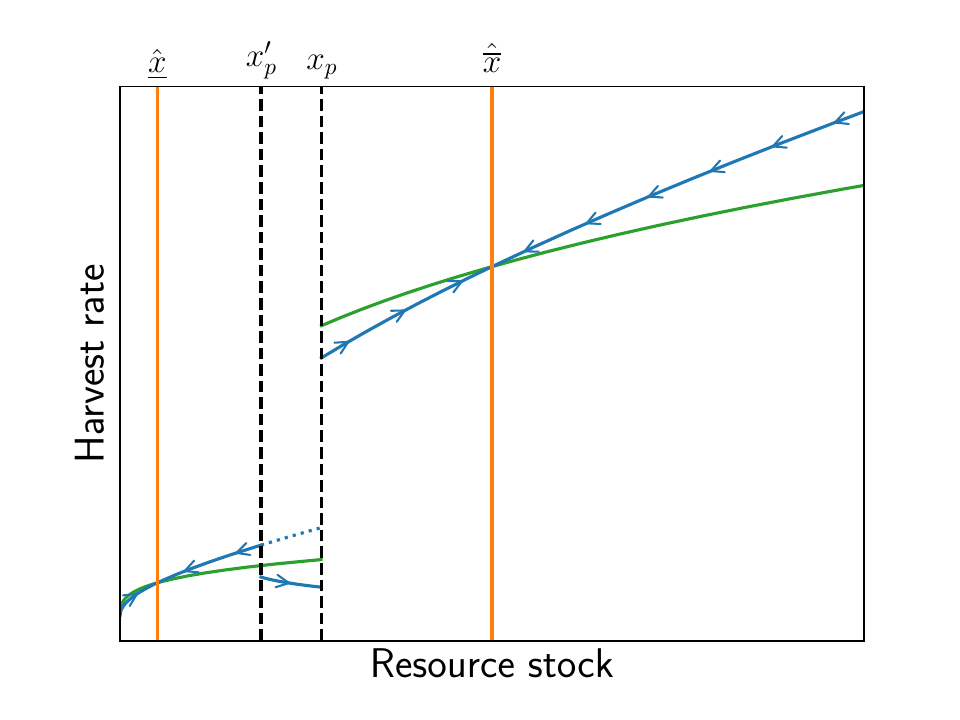}}~~
		\subfloat[Low Tipping Point, Med $\pi$\label{fig:low-med}]{\includegraphics[scale=.4]{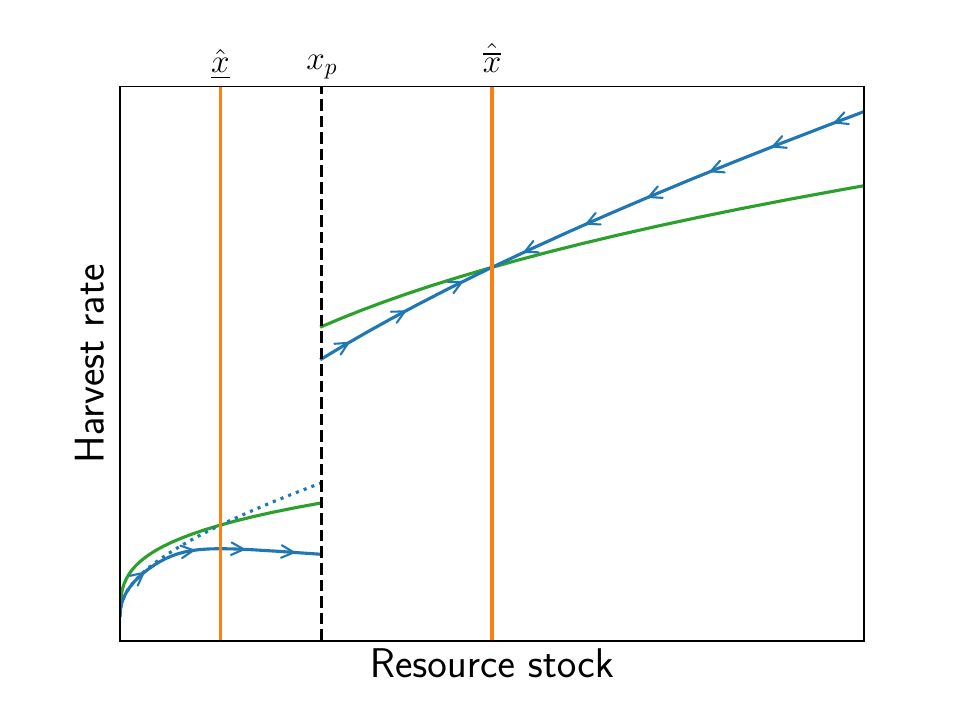}}
		
		\subfloat[Low Tipping Point, High $\pi$\label{fig:low-high}]{\includegraphics[scale=.4]{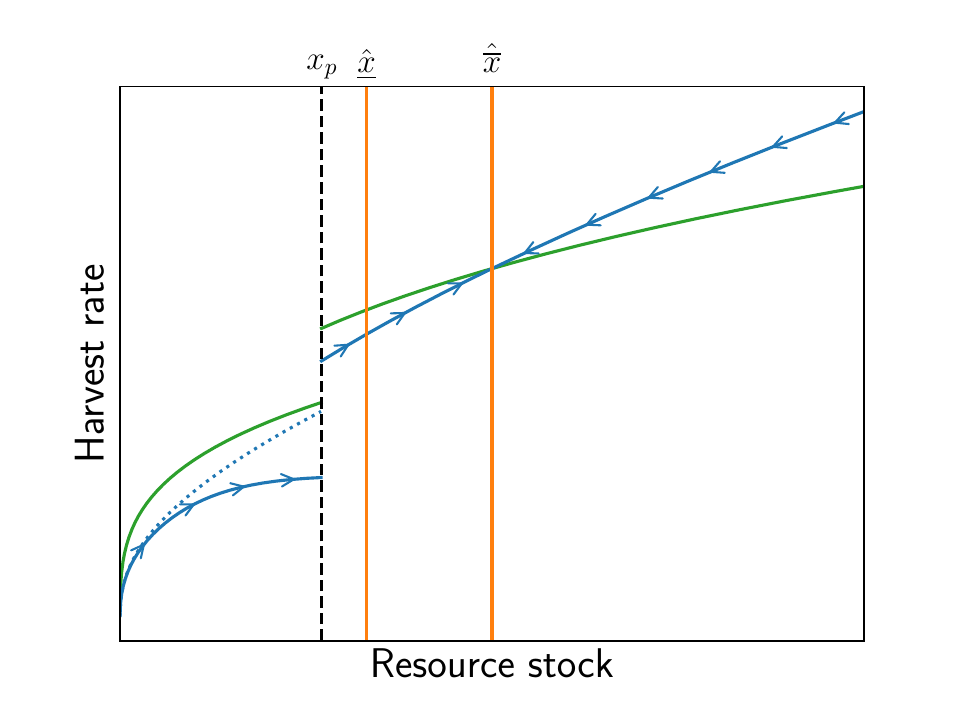}}~~			
		\subfloat[Med Tipping Point, Low $\pi$\label{fig:med-low}]{\includegraphics[scale=.4]{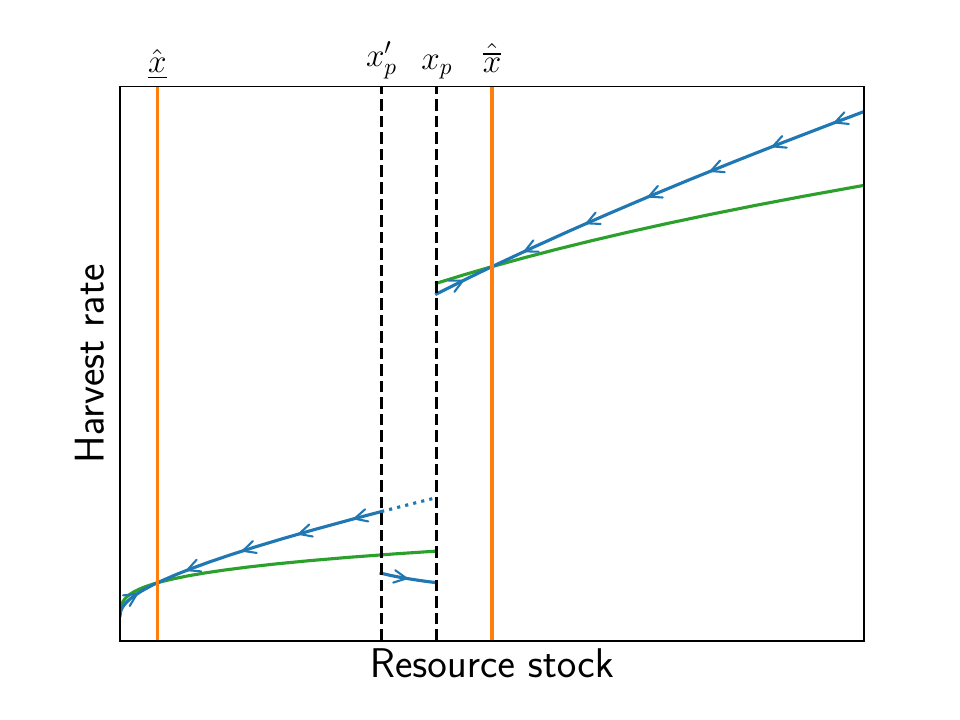}}
		
		\subfloat[Med Tipping Point, Med $\pi$\label{fig:med-med}]{\includegraphics[scale=.4]{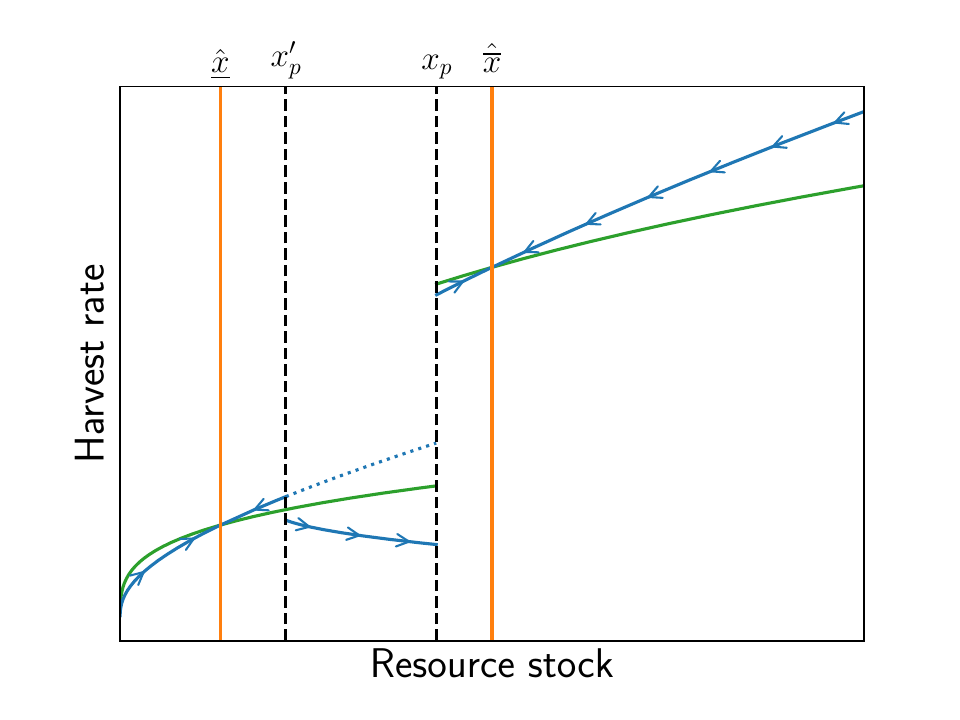}}~~
		\subfloat[Med Tipping Point, High $\pi$\label{fig:med-high}]{\includegraphics[scale=.4]{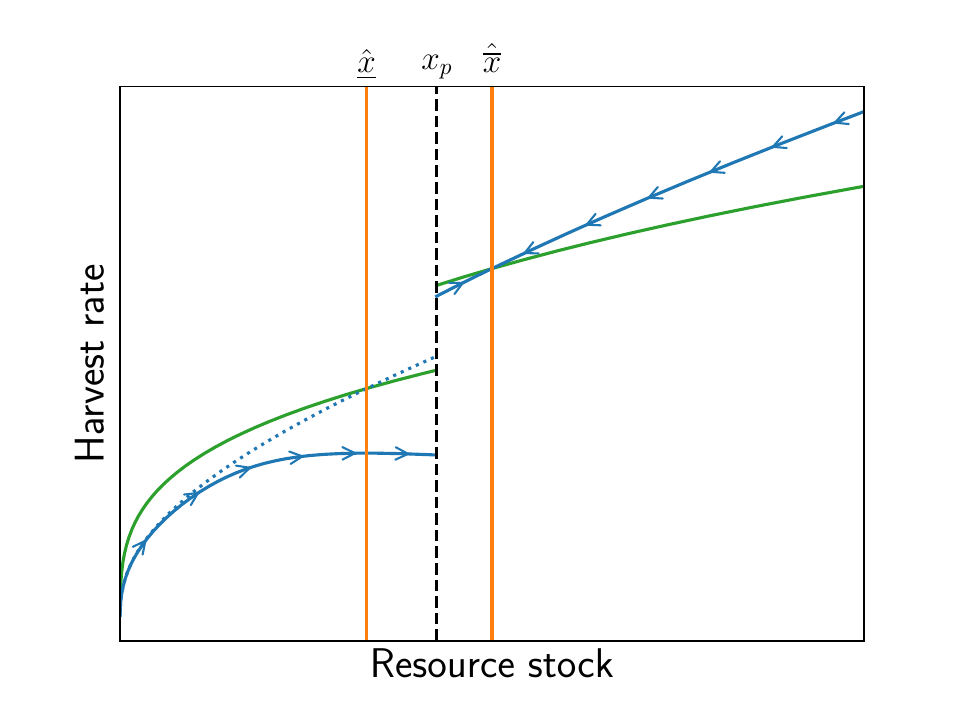}}
	\end{figure}	
	\begin{figure}
		\ContinuedFloat
		\caption{Optimal harvest policies (continued)}
		\centering 
		
		\subfloat[High Tipping Point, Low $\pi$\label{fig:high-low}]{\includegraphics[scale=.4]{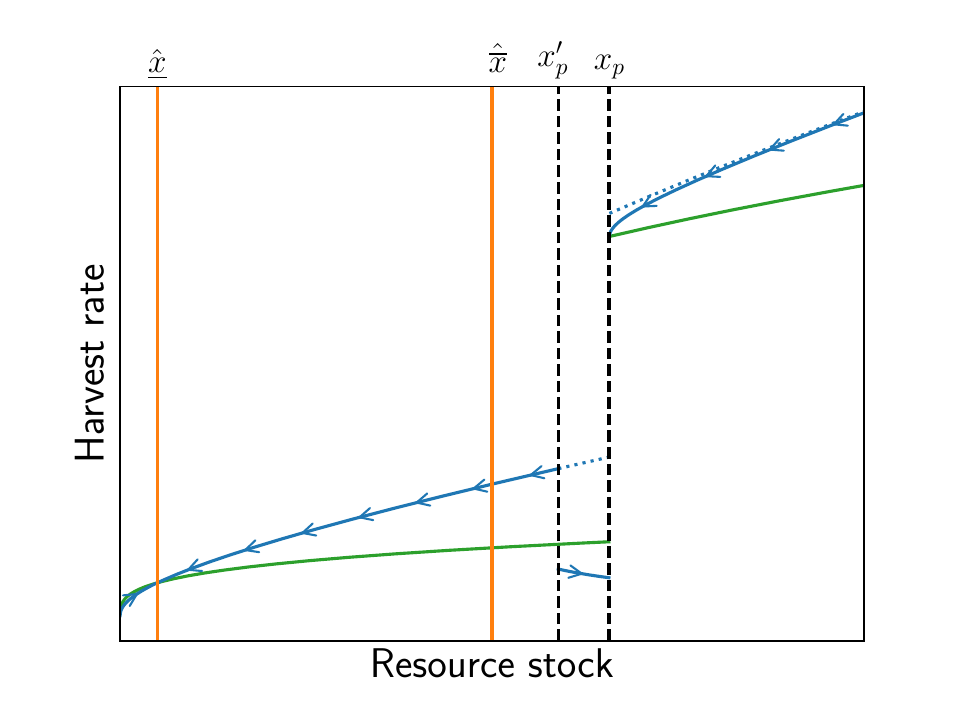}}~~
		\subfloat[High Tipping Point, Med $\pi$\label{fig:high-med}]{\includegraphics[scale=.4]{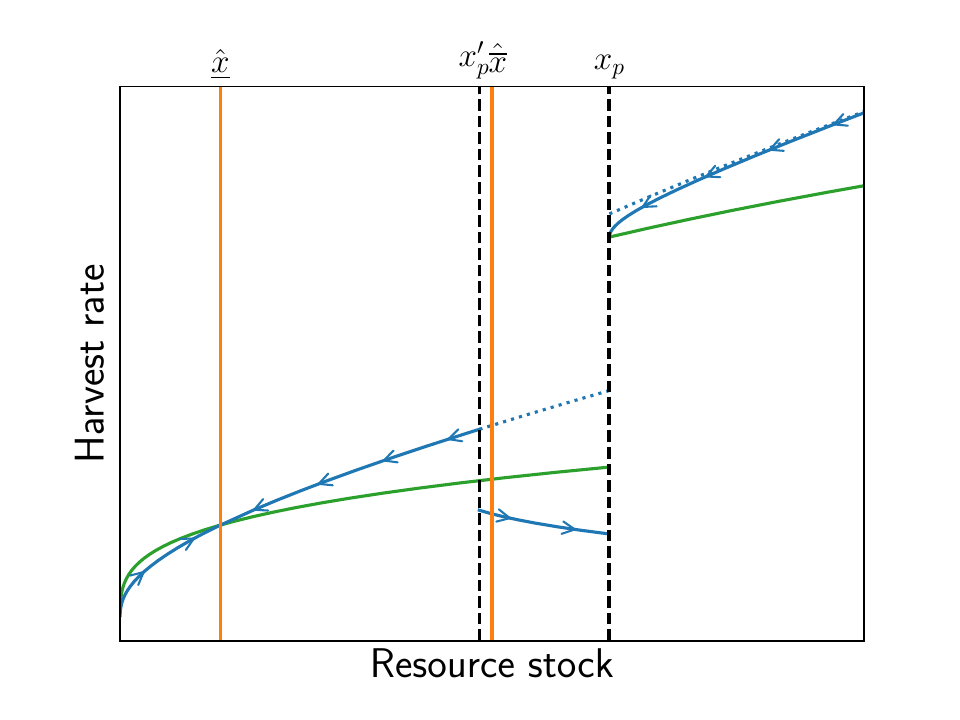}}
		
		\subfloat[High Tipping Point, High $\pi$\label{fig:high-high}]{\includegraphics[scale=.4]{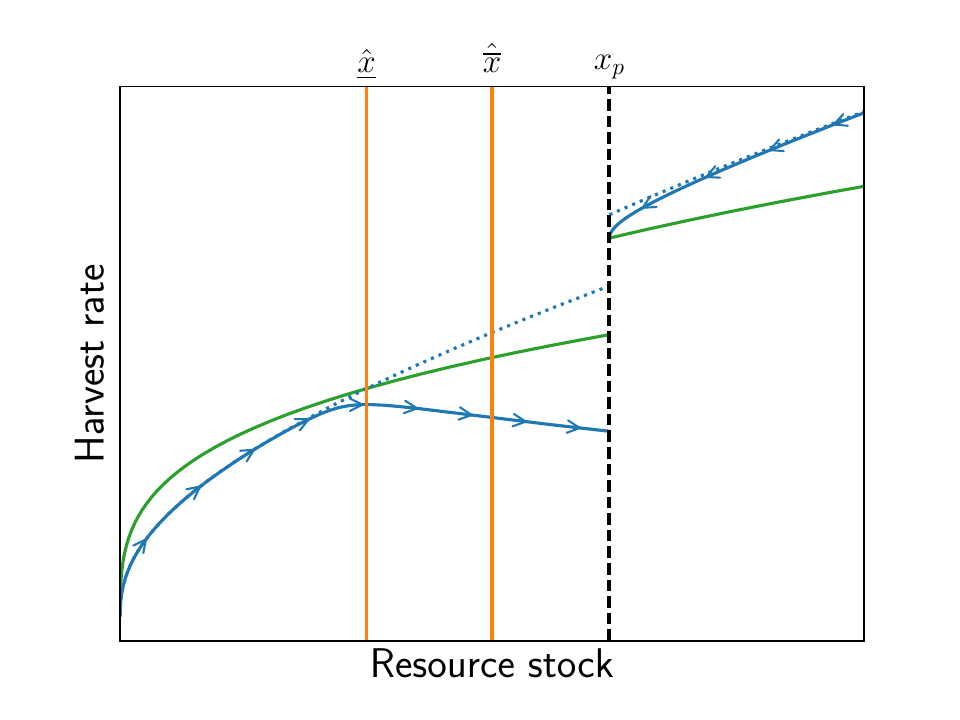}}
	\end{figure}
	
	Below the tipping point ($x_0<x_p$), there is an endogenous tipping point, $x_p'$.  For $x_0<x_p'$, the cost of austerity is too high and recovery, while feasible, is suboptimal; instead the resource stock converges to the low-fecundity stationary point and $\hat x=\hat{\underline x}$ (\Cref{fig:low-low,fig:med-low,fig:med-med,fig:high-low,fig:high-med}).  That is, given model parameters, below this endogenous tipping point, an austere harvest policy is inferior to the standard policy.
	For $x_0\ge x_p'$, austerity does not need to be borne for long and the optimal harvest transitions the renewable resource to high-fecundity and $\hat x\ge x_p$.
	
	Finally, it may be that the endogenous tipping point is inconsequential ($x_p'=0$).  When the low notional steady-state is infeasible ($\hat{\underline x}>x_p$) (\Cref{fig:low-high}) or when the high-/low-fecundity differential is relatively small (\Cref{fig:low-med,fig:med-high,fig:high-high}) then there is no non-trivial low steady-state. In these cases, as long as $x_0>0$, the optimal trajectory always reaches the high-fecundity steady-state and $\hat x=\max\{\hat{\overline x},x_p\}$. Despite the potential nonexistence of a bad long term outcome, as practitioners, we will be more interested in parameter configurations where the tipping point has a significant and tangible impact.
	
	\cite{baggio-fackler:optimal}, \cite{kvamsdal:optimal} and \cite{arvaniti:time} also model endogenous recovery. \citeauthor{baggio-fackler:optimal} has linear social welfare and finds austere harvests. \citeauthor{kvamsdal:optimal} has concave social welfare and instead finds either austere or aggressive harvests.  In \citeauthor{arvaniti:time}, harvests are always aggressive because quasi-hyperbolic discounting weights the present more than the future.
	
	Like \citeauthor{baggio-fackler:optimal}, my optimal harvest policy function is austere. However, my model features an endogenous tipping point, below which the planner judges that the discounted cost of austerity to be too high and low fecundity, while undesirable, is optimal. \citeauthor{arvaniti:time} also assume a fixed and known tipping point, however, quasi-hyperbolic discounting requires focusing on a second best scenario (the planner is unable to commit to a future course of action), rather than the first best scenario that is my own focus and that of much of the prior literature. Finally, as we will see in the following section, my framework is tractable enough to analytically model hysteresis in recruitment.

	\section{Hysteresis}	\label{sec:hysteresis}
	
	In recent years there is evidence that recovery from environmental damage can be subject to hysteresis (\citealt{field:coral, storlazzi:sedimentation} for coral reefs, \citealt{lindig-cisneros:wetland} for wetlands, \citealt{hirota:global} for rain forests). Hysteresis has become an important factor that marine ecologists consider as they seek to understand tipping points \citep{selkoe:principles}.
	
	\begin{figure}
		\caption{Hysteretic tipping recruitment}	\label{fig:hyst-recruit}
		
		\begin{center}
			\includegraphics[scale=.4]{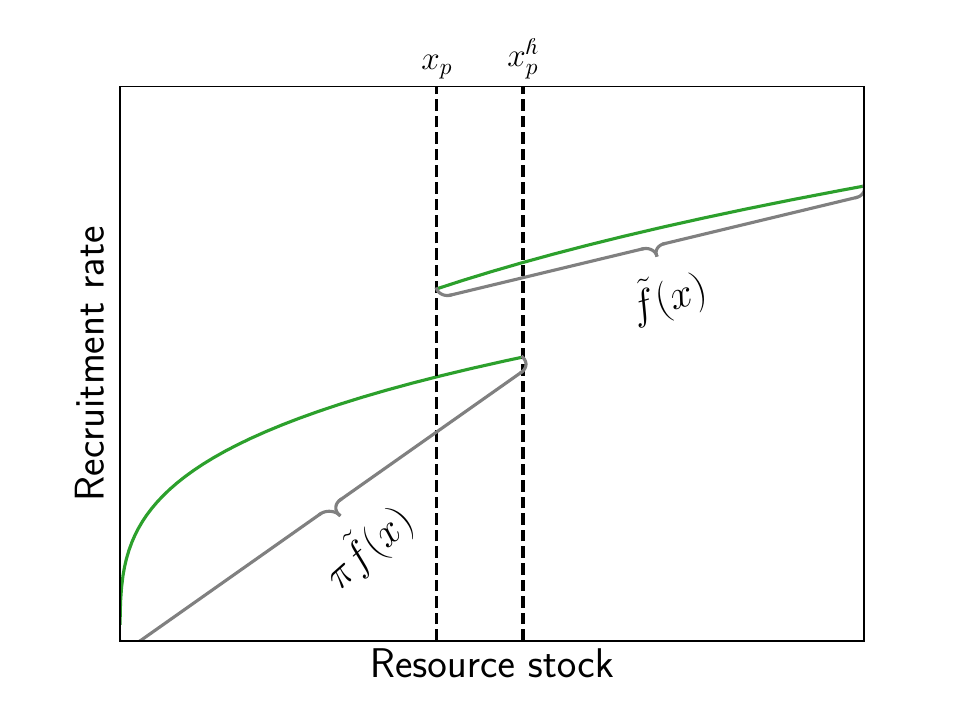}
		\end{center}
	\end{figure}
	
	When recruitment is subject to hysteresis, there are two tipping points. If fecundity is high then the tipping point is given by $x_p$. If the resource stock falls below this tipping point, the renewable resource switches to low-fecundity. With hysteresis, a higher tipping point must be reached in order for the renewable resource to transition to high-fecundity; the low-fecundity tipping point is given by $x_p^\mathcal{h}>x_p$. Functionally, the hysteretic recruitment function has a second argument, $s$:
	\[
		f(x,s)=
			(1-s)\pi\tilde f(x)+s\tilde f(x)
	\]
	where $s\in\{0,1\}$ is the ecosystem's state of fecundity with $s=1$ representing high-fecundity and $s=0$, low-fecundity. For $x<x_p$, $s=0$, for $x\ge x_p^\mathcal h$, $s=1$ and for $x\in[x_p,x_p^\mathcal h)$, $\dot s=0$ (i.e., $s$ retains its value). Recruitment can change discontinuously at $x_p$ and $x_p^\mathcal h$ (\Cref{fig:hyst-recruit}).

	In the model with hysteresis, denote the optimal trajectory $(\hat x^{\mathcal h}(t),\hat h^{\mathcal h}(t))$ with corresponding policy function, $\hat h^\mathcal h(x,s)$, and value function, $V^\mathcal h(x,s)$. With hysteresis,
	\begin{proposition}	\label{prop:hysteresis}
		If $\pi$, $\rho$ and $x_p^\mathcal h$ are sufficiently small then then the optimal trajectory, $(\hat x^{\mathcal h}(t),\hat h^{\mathcal h}(t))$ for $t\ge0$, is unique and there exists $x_p^{\mathcal{h}\prime}\in[0,x_p^\mathcal{h})$ such that:
		\begin{enumerate}[i)]
			\item if $x_0\in(0,x_p^{\mathcal{h}\prime})$ and $s_0=0$ then the optimal trajectory is $(\hat x^{\mathcal h}(t),\hat h^{\mathcal h}(t))=(\hat{\underline x}(t),\hat{\underline h}(t))$ and $\hat x^\mathcal h=\hat{\underline x}$,
			\item if $x_0\in[x_p^{\mathcal{h}\prime},x_p^\mathcal{h})$ and $s_0=0$ then the optimal trajectory $(\hat x^{\mathcal h}(t),\hat h^{\mathcal h}(t))$ is austere and there exists $\tau<\infty$ such that $\hat x_\tau^\mathcal h=x_p^\mathcal{h}$ and $\hat x^\mathcal h=\max\{\hat{\overline x},x_p\}$,
			
			\item if $x_0,\hat{\overline x}\ge x_p$ and $s_0=1$ then $(\hat x^{\mathcal h}(t),\hat h^{\mathcal h}(t))=(\hat{\overline x}(t),\hat{\overline h}(t))$ and $\hat x^\mathcal h=\hat{\overline x}$,
			\item if $x_0\ge x_p>\hat{\overline x}$ and $s_0=1$ then $(\hat x^{\mathcal h}(t),\hat h^{\mathcal h}(t))$ is austere and there exists $\tau<\infty$ such that $\hat x^{\mathcal h}(t)=x_p$ for $t\ge\tau$,
			\item $x_p'<x_p^{\mathcal h\prime}$.
		\end{enumerate}
	\end{proposition}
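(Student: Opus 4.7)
The plan is to adapt the arguments of \Cref{prop:full} to the hysteretic model by exploiting the fact that, while the state $s$ is held constant and the corresponding tipping point is not crossed, the problem reduces to one of the single-regime subproblems already solved in \Cref{sec:constrained} and \Cref{sec:low-fecund}. The parameter hypotheses play the same role as in \Cref{prop:bound-low-policy}, guaranteeing that high-fecundity is sufficiently valuable for the austere candidate to be feasible and competitive with the standard policy.

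For items~iii)--iv) ($s_0=1$, $x_0\ge x_p$), any trajectory that keeps $x_t\ge x_p$ preserves $s_t\equiv1$, so the problem reduces to \eqref{eq:constrained-upper-problem} with optimum given by \Cref{prop:high-fecund}. To rule out deviations that drop below $x_p$, I would observe that once $x_t$ crosses $x_p$ downward the state becomes $s=0$ and the continuation is bounded above by $V^\mathcal{h}(x_t,0)$, which by the domination argument for item~v) below is strictly less than the non-hysteretic continuation that \Cref{prop:full} already showed suffices to rule out the corresponding deviation. For items~i)--ii) ($s_0=0$, $x_0<x_p^\mathcal{h}$), the argument of \Cref{sec:low-fecund} carries over with terminal target $x_p^\mathcal{h}$ and scrap value $V^*(x_p^\mathcal{h})$ in place of $x_p$ and $V^*(x_p)$. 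The necessary conditions \eqref{eq:x-lom}--\eqref{eq:h-lom} are unchanged, the transversality \eqref{eq:fixed-terminal-value-transversality-lower} becomes $\lim_{t\to T}\underline{\mathcal{H}}(x_t,h_t,\lambda_t)=\rho V^*(x_p^\mathcal{h})$, and the analogue of \Cref{prop:low-fecund} produces a Skiba point $x_p^{\mathcal{h}\prime}\in[0,x_p^\mathcal{h})$ at which the austere value $V_2^\mathcal{h}$ and the standard value $\underline V$ agree.

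For item~v), I would establish $V_2^\mathcal{h}(x_0)<V_2(x_0)$ for every $x_0\in(0,x_p)$, from which the strict inequality $x_p'<x_p^{\mathcal{h}\prime}$ is immediate: the Skiba condition $V_2(x_p')=\underline V(x_p')$ combined with strict domination gives $V_2^\mathcal{h}(x_p')<\underline V(x_p')$, hence $x_p^{\mathcal{h}\prime}>x_p'$. Strict domination follows from a matching construction. Let $(x_t^\mathcal{h},h_t^\mathcal{h})$ be the optimal hysteretic austere trajectory, $T_1$ the first hitting time of $x_p$, and $T^\mathcal{h}$ the hitting time of $x_p^\mathcal{h}$. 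The non-hysteretic trajectory that uses the same controls on $[0,T_1]$ and then uses $h_t'=h_t^\mathcal{h}+(1-\pi)\tilde f(x_t^\mathcal{h})>h_t^\mathcal{h}$ on $[T_1,T^\mathcal{h}]$ reproduces the same stock path under $\tilde f$, arrives at $x_p^\mathcal{h}$ at the same time with identical continuation $V^*(x_p^\mathcal{h})$, and by strict monotonicity of $u$ yields strictly larger welfare.

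The main obstacle I anticipate is the circular-looking dependence between items~iii)--iv) and~v): the deviation argument for iii)--iv) invokes the strict domination proved in v). I would resolve this by ordering the argument so that v) is established first, using only the well-defined austere and standard values $V_2$, $V_2^\mathcal{h}$, and $\underline V$ (none of which require the unconstrained problem above $x_p$), and then using the resulting bound on hysteretic continuations to close off the deviation argument for iii)--iv).
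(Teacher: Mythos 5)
Your proposal is correct and follows essentially the same route as the paper: split by regime, observe that the $s=1$ problem is unchanged and reuse \Cref{prop:high-fecund}, rerun the \Cref{prop:low-fecund} argument with terminal target $x_p^\mathcal{h}$ and transversality value $\rho V^*(x_p^\mathcal{h})$, and obtain $x_p'<x_p^{\mathcal{h}\prime}$ from $V_1^\mathcal{h}=V_1$ together with $V_2^\mathcal{h}<V_2$. The one place you go beyond the paper is item v), where your explicit harvest-augmentation construction ($h_t'=h_t^\mathcal{h}+(1-\pi)\tilde f(x_t^\mathcal{h})$ on $[T_1,T^\mathcal{h}]$, replicating the stock path while strictly raising harvests) makes the strict domination $V_2^\mathcal{h}<V_2$ airtight, whereas the paper relies on the looser observation that the hysteretic harvest path is feasible in the non-hysteretic problem whose optimum is unique.
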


	As in the model without hysteresis, there is an endogenous tipping point. Above the endogenous tipping point, the cost of austerity is relatively low and the optimal harvest policy is austere and attains high-fecundity recruitment (\Cref{fig:hysteresis}). Since the low-fecundity tipping point is greater than the high-fecundity tipping point, upon reaching high-fecundity recruitment, the optimal harvest policy may then reverse course and spend down the resource stock to reach the high-fecundity steady-state (\Cref{fig:hyst-med-low,fig:hyst-med-med,fig:hyst-med-high,fig:hyst-high-low,fig:hyst-high-med,fig:hyst-high-high}).
	Below the endogenous tipping point, an austere harvest achieving high-fecundity recruitment is suboptimal and instead the optimal harvest follows the standard policy converging to the low-fecundity steady-state (\Cref{fig:hyst-low-low,fig:hyst-low-med,fig:hyst-med-low,fig:hyst-med-med,fig:hyst-high-low,fig:hyst-high-med,fig:hyst-high-high}).	
	
	\begin{figure}
		\caption{Hysteretic optimal harvest policies}	\label{fig:hysteresis}
		\centering
		
		\subfloat[Low Tipping Point, Low $\pi$\label{fig:hyst-low-low}]{\includegraphics[scale=.4]{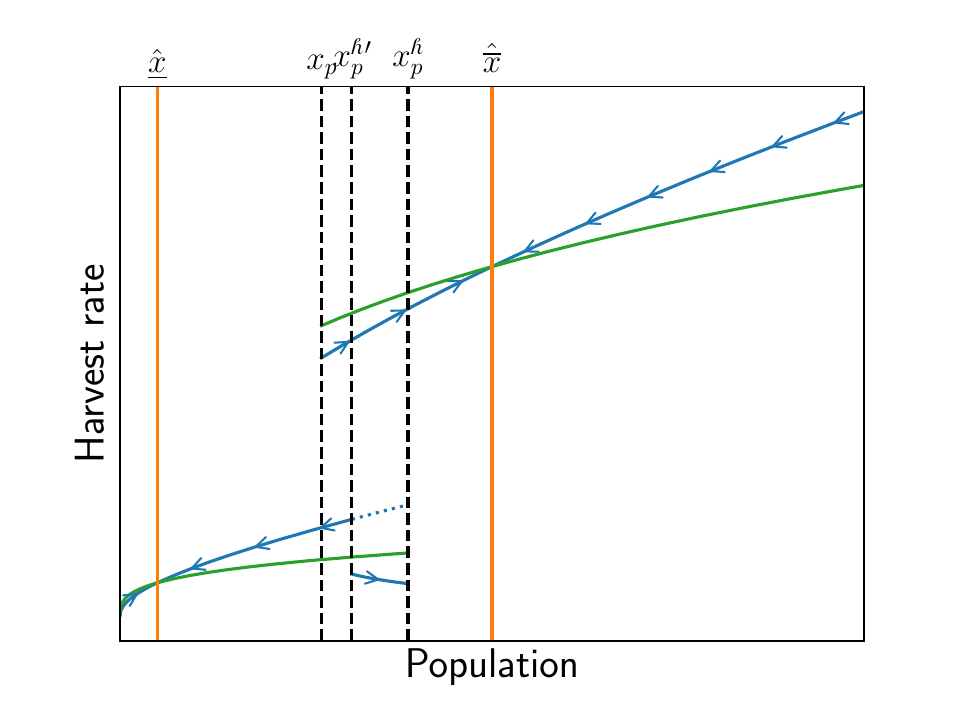}}~~
		\subfloat[Low Tipping Point, Med $\pi$\label{fig:hyst-low-med}]{\includegraphics[scale=.4]{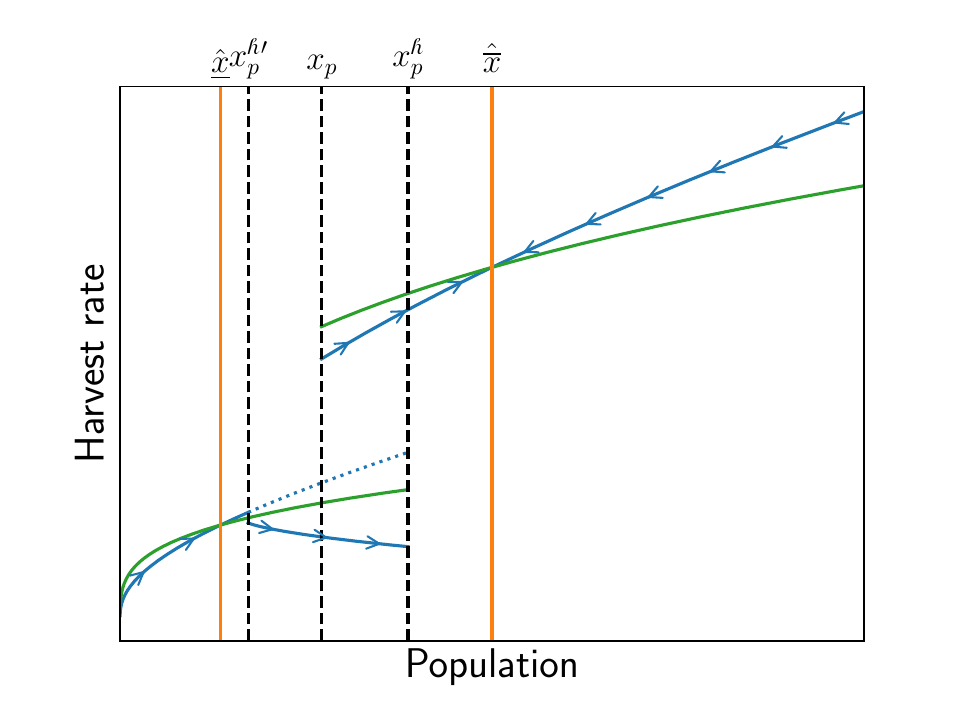}}
		
		\subfloat[Low Tipping Point, High $\pi$\label{fig:hyst-low-high}]{\includegraphics[scale=.4]{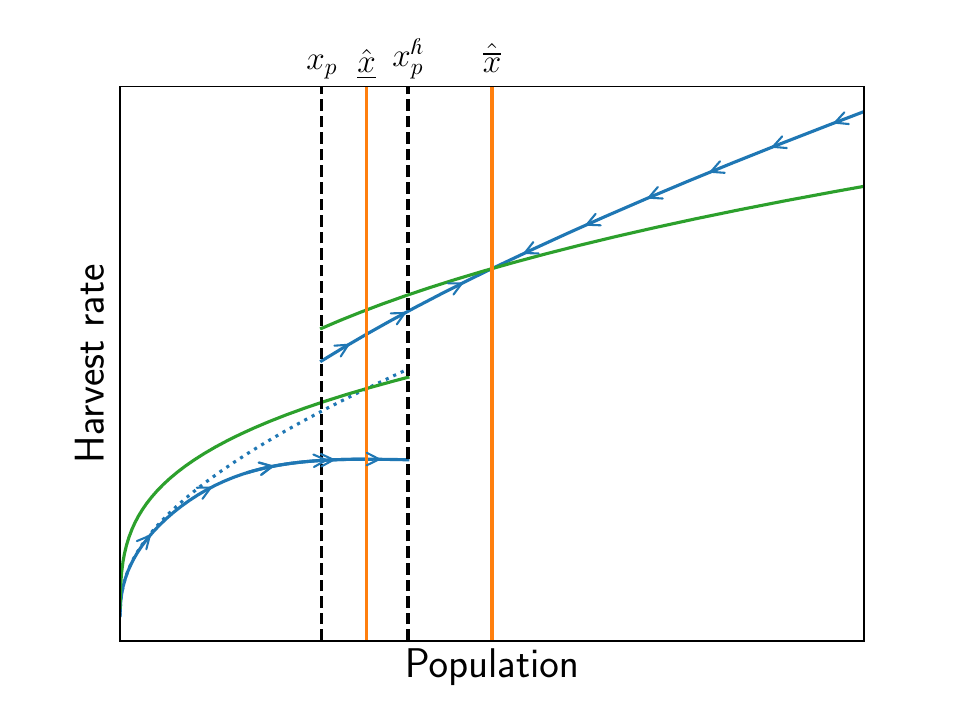}}~~			
		\subfloat[Med Tipping Point, Low $\pi$\label{fig:hyst-med-low}]{\includegraphics[scale=.4]{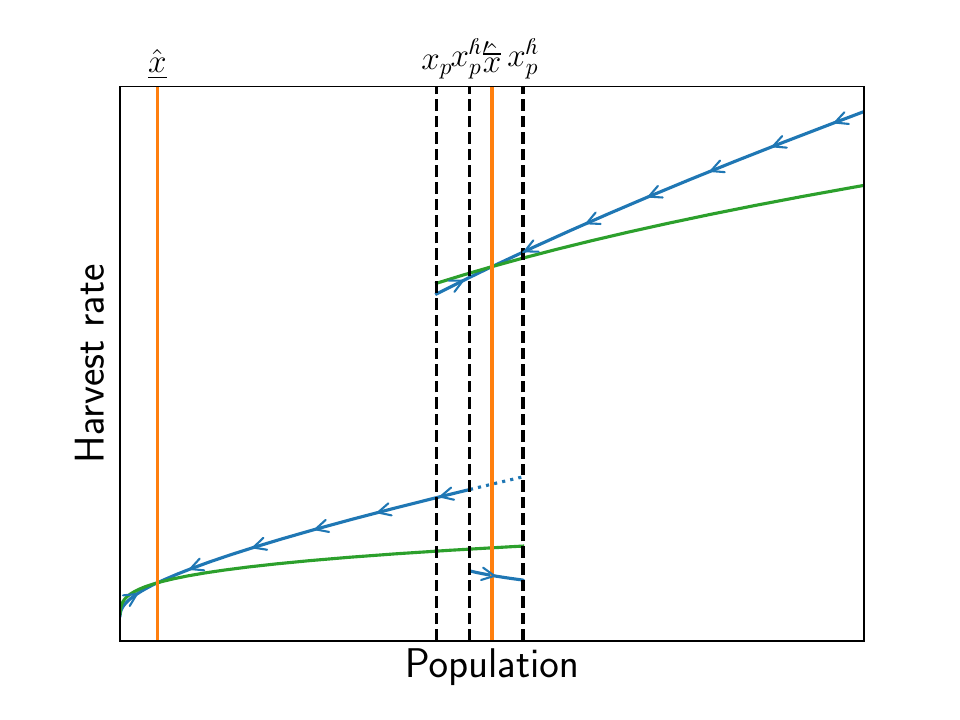}}
		
		\subfloat[Med Tipping Point, Med $\pi$\label{fig:hyst-med-med}]{\includegraphics[scale=.4]{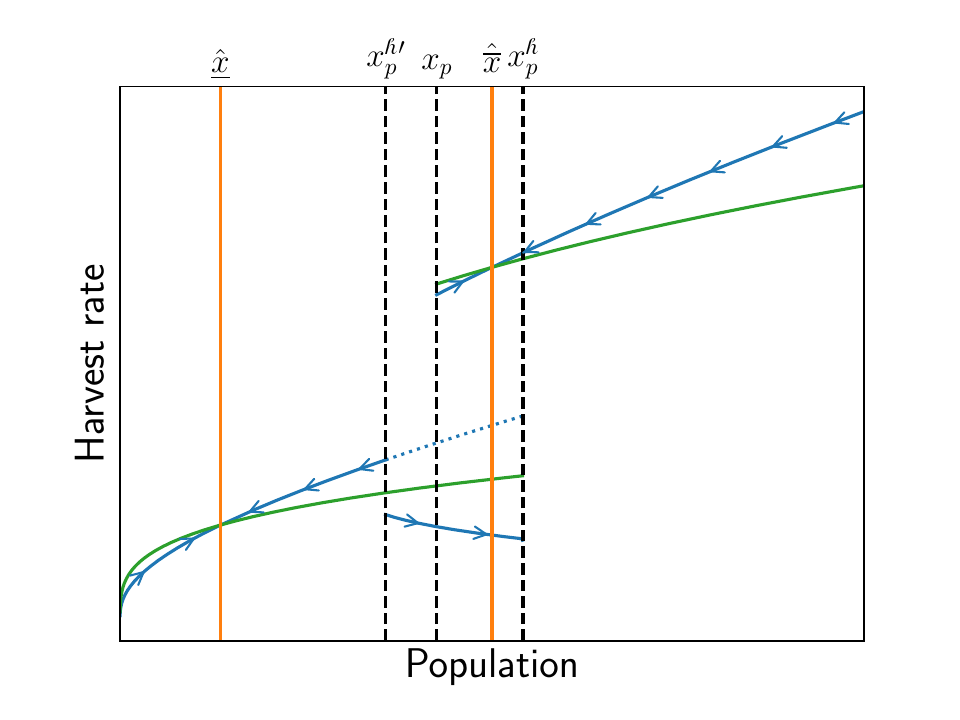}}~~
		\subfloat[Med Tipping Point, High $\pi$\label{fig:hyst-med-high}]{\includegraphics[scale=.4]{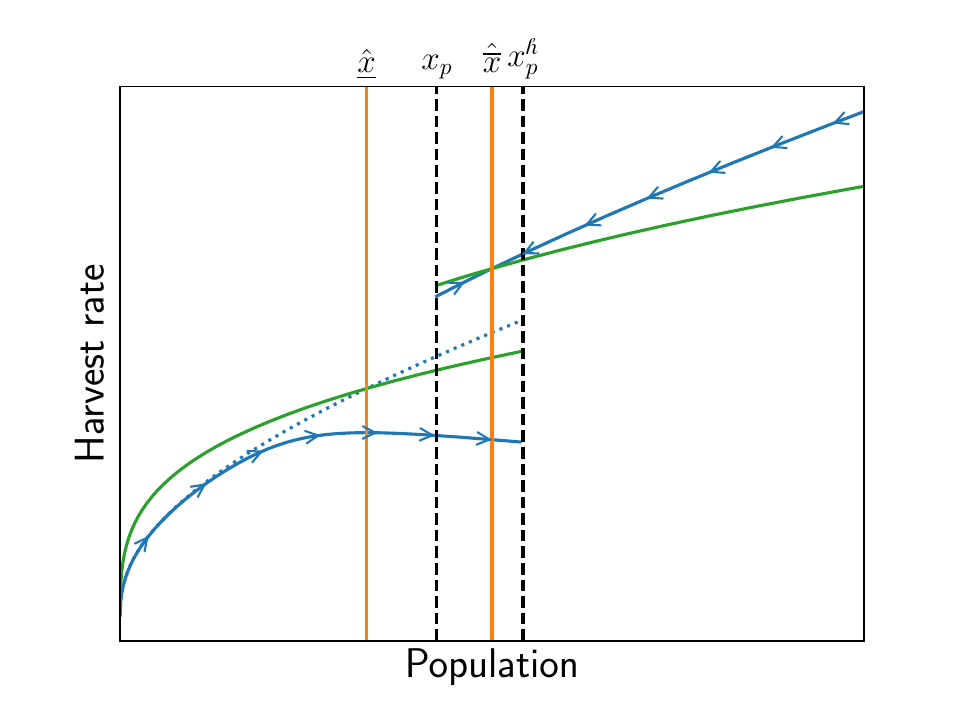}}				
	\end{figure}
	\begin{figure}
		\ContinuedFloat
		\caption{Hysteretic optimal harvest policies (continued)}
		\centering
		
		\subfloat[High Tipping Point, Low $\pi$\label{fig:hyst-high-low}]{\includegraphics[scale=.4]{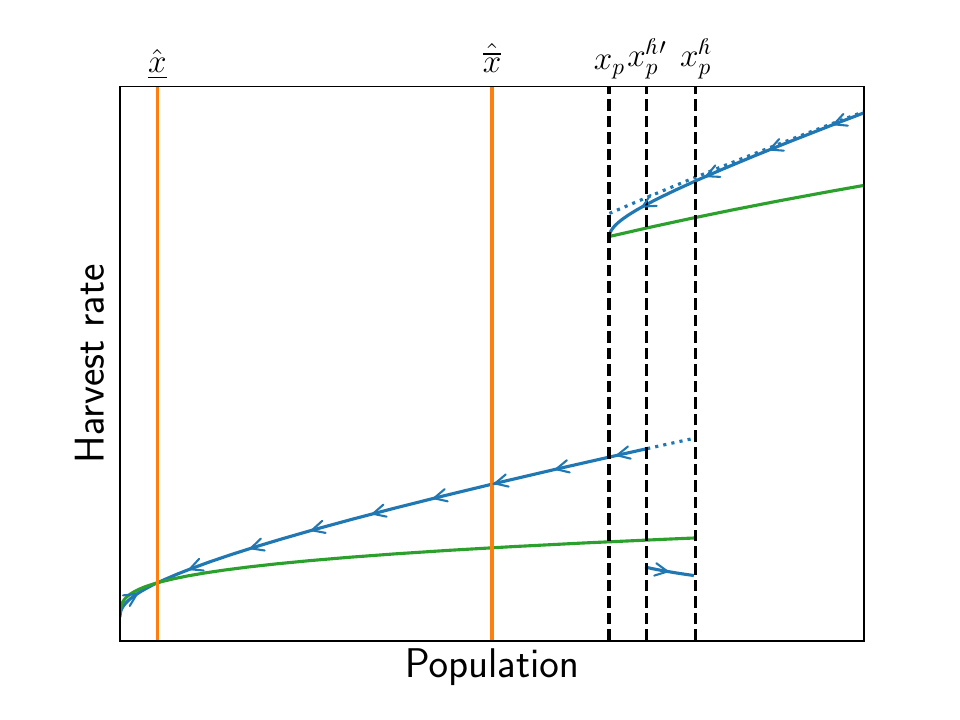}}~~
		\subfloat[High Tipping Point, Med $\pi$\label{fig:hyst-high-med}]{\includegraphics[scale=.4]{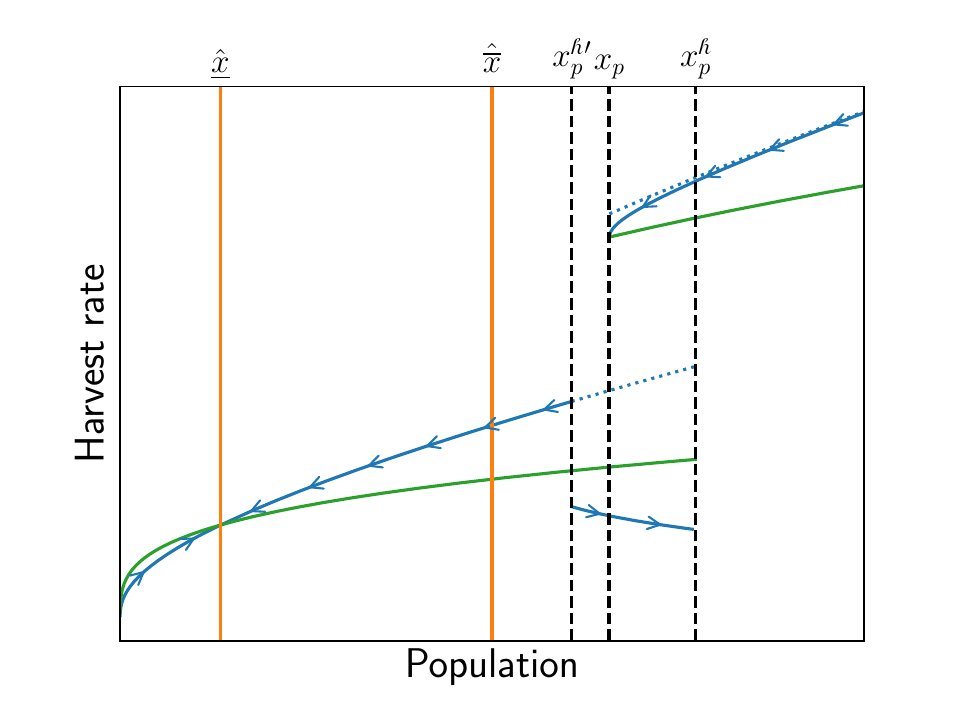}}
		
		\subfloat[High Tipping Point, High $\pi$\label{fig:hyst-high-high}]{\includegraphics[scale=.4]{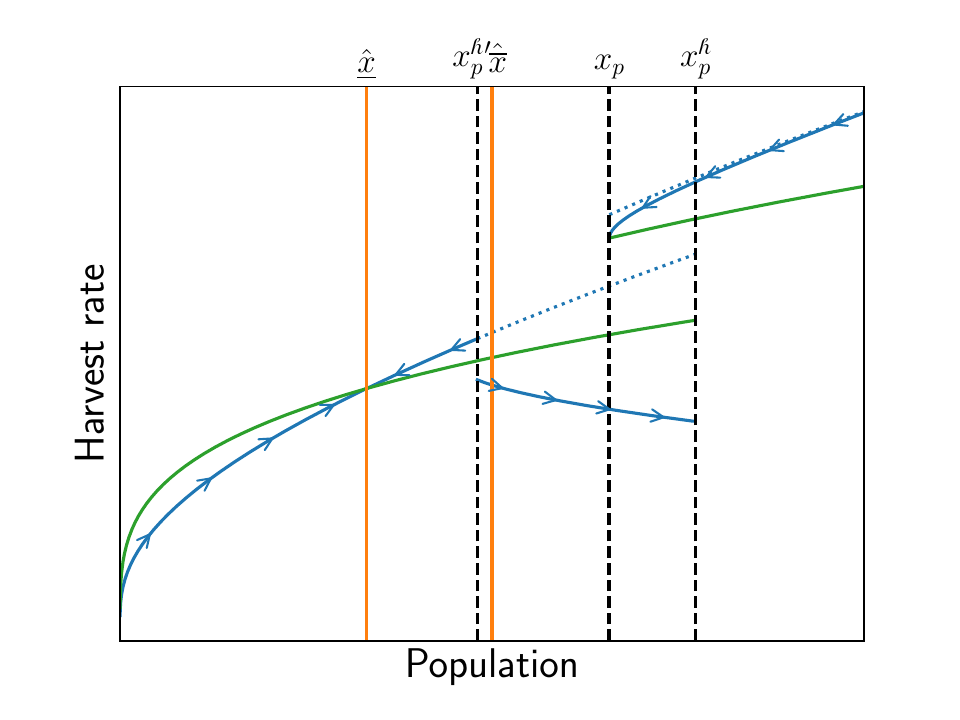}}
	\end{figure}
	
	In contrast to the model without hysteresis, a high-fecundity steady-state at the high-fecundity tipping point is no longer stable. If the high and low-fecundity differential is not too large then a perturbation that drops the resource stock below the high-fecundity tipping point requires an extended recovery period to return to high-fecundity, whereupon the optimal harvest spends down the resource stock to return to the steady-state (\Cref{fig:hyst-high-med,fig:hyst-high-high}). However, if the high- and low-fecundity differential is large then the endogenous tipping point, $x_p^{\mathcal h\prime}$, may be greater than the exogenous, high-fecundity tipping point. While returning to high-fecundity recruitment is feasible, it is suboptimal so that a fall below the tipping point becomes permanent (\Cref{fig:hyst-high-low}). That is, a high-fecundity stationary point that corresponds to the high-fecundity tipping point may not even be ``long run stable.''
	
	Finally, since $x_p^{\mathcal h\prime}>x_p'$ with hysteresis, the range over which initial resource stocks optimally remains at low-fecundity is larger.

	A disheartening example of the slow recovery from renewable resource collapse is the case of the Atlantic northwest cod fishery of the early 1990s \citep{hutchings:what, walters:lessons}. After more than three decades of restricted harvests, the Atlantic northwest cod fishery has still not recovered to sustainable levels \citep{dfo:stock-2020,dfo:stock-2023} and the hoped for 2025 recovery \citep{rose:northern} appears unlikely to have come to fruition.

	\section{Conclusion}	\label{sec:conclude}
	
	In this paper I characterized the optimal extraction of a renewable resource where recruitment is subject to tipping points.  Much of the existing literature assumes that tipping is irreversible. In contrast, with a fixed tipping point, I am able to model renewable resource recovery. Moreover, when ecosystem recovery is possible, it becomes straightforward to model and analyze hysteresis. To the best of my knowledge, I am the first to model hysteresis.
	
	With endogenous reversibility, ecosystem recovery is not always optimal. When the ecosystem is sufficiently degraded, austerity becomes too costly and low fecundity becomes permanent. If in addition there is hysteresis, even a small perturbation that causes tipping may be permanent when the low-fecundity penalty is sufficiently severe. In this case, austerity may be too costly, even for an infinitesimal drop below the high-fecundity tipping point.
	
	This analysis presents a cautionary tale. First, it demonstrates that when an ecosystem suffers sufficient degradation, the resulting damage is irreversible. Moreover, with hysteresis, even a small perturbation can trigger long-lasting and potentially permanent changes. This underscores the delicate balance of natural systems and the critical importance of conservation efforts to prevent crossing ecological tipping points. %
	
	It is worth pointing out that this framework is applicable to other situations. Optimal renewable resource extraction models belong to the family of optimal growth models; recruitment is production and escapement (production less consumption) is investment. As such, the resource stock could be thought of as infrastructure, human or physical capital or any accumulated productive-resource.\footnote{
		\cite{azariadis:threshold} model ``threshold externalities'' in an overlapping generations framework. They motivate their article by illustrating their ideas using a simple example of capital accumulation where the production function takes the form of \Cref{fig:recruit}.
	} For instance, estimates of the cost of upgrading the entire U.S.\ rail system (freight and passenger) is estimated to be in the hundreds of billions and perhaps more than a trillion dollars.  Indeed, just to make overdue repairs to the passenger rail system is estimated to cost \$25.2 billion \citep{asce:comprehensive} and one estimate of upgrades amounts to \$209 billion \citep{narp:narp}. The modernization and electrification of the U.S.\ freight system may cost as much as \$1.1 trillion \citep{aar:study}. Due to inadequate past investment, the U.S.\ rail system may be below a critical threshold where the cost of repairs and upgrades are overwhelming.

	In this paper, I presented an ideal scenario where a social planner has complete control over harvests, all features of the ecosystem are known with certainty and there are no spillovers with other resources. With more than one resource extractor (tragedy of the commons), remaining above the tipping point would be more challenging \citep{levhari:great}. %
	Moreover, there may be important interactions between different renewable resources. For instance, deforestation results in the loss of habitat for native wildlife that may impact wildlife fecundity \citep{faria:breakdown}. Finally, uncertainty is important but has been left unmodeled. These complications, while important, are not considered here and call for further research.

	\appendix
	
	\section*{Appendix}
	
	\section{Proofs}	\label{appendix:proofs}
	
	\numberwithin{equation}{section}
	\setcounter{equation}{0}
	
	\begin{proof}[Proof of \Cref{prop:high-fecund}]
		i) For $\hat{\overline x}\ge x_p$, since $(\hat{\overline x}(t),\hat{\overline h}(t))$ is optimal in the absence of constraint, it is also optimal with the constraint $x(t)\ge x_p$. Thus $(\hat x^*(t),\hat h^*(t))=(\hat{\overline x}(t),\hat{\overline h}(t))=(x^s(t),h^s(t))$ and $\hat x^*(t)\rightarrow\hat{\overline x}$.
		
		ii) For $\hat{\overline x}<x_p$, there are two classes of trajectories $(x(t),h(t))$ satisfying \eqref{eq:x-lom} and \eqref{eq:h-lom}.
		
		In one, $(x(t),h(t))$ crosses the $\dot x=0$ curve (the green line in \cref{fig:constrained-upper-boundary}) at some point above $x_p$. Since $\hat{\overline x}<x_p$, the crossing point cannot be stationary ($\dot h(t)<0$) and standard arguments show that $(x(t),h(t))$ is suboptimal.
		
		In the second class, $(x(t),h(t))$ reaches $x(T)=x_p$ at some time $T$. For such trajectories, upon reaching $x(T)=x_p$, \eqref{eq:x-lom} and \eqref{eq:h-lom} imply that $(x(t),h(t))=(x_p,\tilde f(x_p))$ for $t>T$, otherwise the constraint that $x(t)\ge x_p$ would be violated. The optimal harvest problem can thus be rewritten as a free-terminal-time problem with terminal value, $e^{-\rho T}u(\tilde f(x_p))/\rho$:
		\[
		\begin{gathered}
			V^*(x_0)=\max_{h(t)\ge0}\int_0^Te^{-\rho t}u(x(t))dt+e^{-\rho T}\frac{u(\tilde f(x_p))}{\rho} \\
			\text{s.t. }\dot x(t)=\tilde f(x(t))-h(t) \\
			x(t)\ge x_p \\
			T \text{ free} \\
			\text{given }x_0\ge x_p
		\end{gathered}
		\]
		where the transversality condition is
		\begin{equation}	\label{eq:fixed-terminal-value-transversality-upper}
			\overline{\mathcal H}(x(T),h(T),\lambda(T))=u(\tilde f(x_p)).\footnote{
				See the discussion following \eqref{eq:fixed-terminal-value-transversality-lower} for the intuition behind this transversality condition.}
		\end{equation}
		But $u(h(T))+\lambda(T)[\tilde f(x_p)-h(T)]=u(\tilde f(x_p))$ if and only if $h(T)=\tilde f(x_p)$. Therefore, for $t<T$, $(\hat x^*(t),\hat h^*(t))$ is such that $\lim_{t\rightarrow T}\hat h^*(t)=\tilde f(x_p)$ and for $t\ge T$ $(\hat x^*(t),\hat h^*(t))=(x_p,\tilde f(x_p))$. Let $\tau=T$. Since $(\hat x^*(t),\hat h^*(t))$ is the only trajectory satisfying \eqref{eq:fixed-terminal-value-transversality-upper}, the solution is unique.
		
		Now consider trajectory $(x^s(t),h^s(t))$ that corresponds to the standard policy $h^s(\cdot)$ for a given $x_0$ (dotted blue line in \Cref{fig:constrained-upper-boundary}). In the continuous, high-fecundity problem, $x^s(t)\rightarrow\hat{\overline x}$ from above so that at some time $T'$, $x^s_{T'}=x_p$ and $h_{T'}^s>\tilde f(x_p)$. Therefore, $(\hat x^*(t),\hat h^*(t))$ and is austere.
	\end{proof}
	
	\begin{lemma}	\label{lem:high-fecund-v-bound}
		For the constrained high-fecundity problem, $V^*(x)\ge\frac{u(\tilde f(x_p))}{\rho}$.
	\end{lemma}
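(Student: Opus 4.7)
The plan is to exhibit a specific feasible policy for the constrained high-fecundity problem whose value is at least $u(\tilde f(x_p))/\rho$, and then appeal to the definition of $V^*$ as a supremum over all feasible policies.

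The natural candidate is the stationary policy that keeps the resource stock fixed at its initial level. Concretely, for any $x_0 \geq x_p$, consider the constant harvest rule $h_t \equiv \tilde f(x_0)$. Under this rule the law of motion $\dot x_t = \tilde f(x_t) - h_t$ gives $\dot x_0 = 0$, and by uniqueness for the ODE the unique trajectory is $x_t \equiv x_0$. In particular, $x_t = x_0 \geq x_p$ for all $t \geq 0$ and $h_t \geq 0$, so the constraints of \eqref{eq:constrained-upper-problem} hold. The discounted payoff of this policy is
\[
\int_0^\infty e^{-\rho t} u(\tilde f(x_0))\, dt \;=\; \frac{u(\tilde f(x_0))}{\rho}.
\]

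Now I would observe that $u$ is strictly increasing (for each CRRA parameter $\sigma>0$) and $\tilde f$ is strictly increasing by assumption, so $x_0 \geq x_p$ implies $u(\tilde f(x_0)) \geq u(\tilde f(x_p))$. Since $V^*(x_0)$ is the maximum over all admissible $\{h_t\}$, it is at least the payoff of the constant policy above, giving
\[
V^*(x_0) \;\geq\; \frac{u(\tilde f(x_0))}{\rho} \;\geq\; \frac{u(\tilde f(x_p))}{\rho},
\]
which is exactly the asserted bound.

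There is really no serious obstacle here; the only step that calls for a moment of care is checking that the constant harvest $\tilde f(x_0)$ really is admissible in \eqref{eq:constrained-upper-problem}, i.e.\ nonnegative and compatible with the state constraint $x_t \geq x_p$. Nonnegativity follows from $\tilde f \geq 0$ on $[0,\infty)$ since $\tilde f(0)=0$ and $\tilde f$ is increasing, and the state constraint is satisfied because the trajectory is stationary at $x_0 \geq x_p$. With these trivial checks the lemma follows immediately from the comparison above.
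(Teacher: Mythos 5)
Your proposal is correct and uses essentially the same argument as the paper: lower-bound $V^*(x_0)$ by the payoff of a feasible constant-harvest policy. The paper harvests $\tilde f(x_p)$ (letting the stock drift upward when $x_0>x_p$) while you harvest $\tilde f(x_0)$ (holding the stock fixed), a trivial variation that in fact yields the marginally sharper bound $u(\tilde f(x_0))/\rho$.
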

	\begin{proof}
		Since $\tilde f$ is strictly increasing, if $x(t)\ge x_p$ for any $t\ge0$ then harvesting $\tilde f(x_p)$ is always feasible but not necessarily optimal. Therefore
		\begin{equation*}
			\begin{aligned}
				V^*(x_0) & = \int_{0}^{\infty}e^{-\rho t}u(\hat h^*(t))dt \\
				&\ge\int_{0}^{\infty}e^{-\rho t}u(\tilde f(x_p))dt \\
				& =\frac{u(\tilde f(x_p))}{\rho}.
			\end{aligned}
		\end{equation*}
	\end{proof}
	
	\begin{lemma}	\label{lem:hamiltonian-slope}
		\begin{equation*}
			\mathcal H(x,h,u'(h))=u(h)+u'(h)[A\tilde f(x)-h]
		\end{equation*}
		is strictly decreasing in $h$ when $h<A\tilde f(x)$ and strictly increasing in $h$ when $h>A\tilde f(x)$.
	\end{lemma}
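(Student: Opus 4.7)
The plan is to simply differentiate $\mathcal H(x,h,u'(h))$ with respect to $h$ and read off the sign. Note that this is the envelope-free partial derivative: $\lambda$ is tied to $u'(h)$, so I must differentiate through that dependence as well.

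Computing directly,
\begin{equation*}
\frac{\partial}{\partial h}\mathcal H(x,h,u'(h)) = u'(h) + u''(h)[A\tilde f(x)-h] - u'(h) = u''(h)[A\tilde f(x)-h],
\end{equation*}
since the $u'(h)$ from differentiating $u(h)$ cancels against the $-u'(h)$ from differentiating $-u'(h)\cdot h$. The only nontrivial observation is that the CRRA assumption with $\sigma>0$ makes $u$ strictly concave, hence $u''(h)<0$ for all $h>0$. Consequently the sign of the derivative is the opposite of the sign of $A\tilde f(x)-h$: strictly negative when $h<A\tilde f(x)$ and strictly positive when $h>A\tilde f(x)$, which is exactly the claim.

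There is essentially no obstacle here; the lemma is a one-line calculation whose only subtlety is remembering to differentiate the costate $\lambda=u'(h)$ along with the explicit $h$'s. I would present it as a single displayed computation followed by a sentence invoking strict concavity of $u$ to conclude.
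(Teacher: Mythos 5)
Your computation and conclusion are exactly the paper's proof: differentiate through the costate $\lambda=u'(h)$, note the cancellation of the $u'(h)$ terms to obtain $u''(h)[A\tilde f(x)-h]$, and invoke strict concavity of the CRRA utility. No differences of substance.
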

	\begin{proof}
		Differentiating with respect to $h$,
		\[
		\frac{\partial\mathcal H(x,h,u'(h))}{\partial h}=u''(h)[A\tilde f(x)-h].
		\]
		Since $u''(h)<0$, this is negative when $h<A\tilde f(x)$ and positive when $h>A\tilde f(x)$.
	\end{proof}
	
	\begin{lemma}	\label{lem:bound-low-policy}
		If $\pi$, $\rho$ %
		and $x_p$ are sufficiently small then $\hat{\underline h}(x_p)\le\tilde f(x_p)$.
	\end{lemma}
	The condition that $\hat{\underline h}(x_p)\le\tilde f(x)$ is used in the subsequent proofs and when $\pi$, $\rho$ %
	and $x_p$ are sufficiently small, \Cref{lem:bound-low-policy} ensures that this is the case.
	
	\begin{proof}
		Consider the limiting case where $\pi=0$ so that below the tipping point the resource is not renewable (i.e., the cake eating problem). The optimal harvest policy for CRRA instantaneous social welfare when $\pi=0$ and $x<x_p$ is $\underline{\hat h}(x)=\rho\sigma^{-1}x$. Given $\rho,\sigma>0$, since $\lim_{x\rightarrow0}\tilde f'(x)=\infty$, for sufficiently small $x_p$, $\underline{\hat h}(x_p)\le\tilde f(x_p)$. Alternatively, given $x_p$, for $\rho$ sufficiently small, $\hat{\underline h}(x_p)\le\tilde f(x_p)$. Since $\underline{\hat h}(\cdot)$ is continuous in $\pi$ and $\rho$, if $\pi$, $\rho$ and $x_p$ are sufficiently small then $\underline{\hat h}(x_p)\le\tilde f(x_p)$.
	\end{proof}

	\begin{proof}[Proof of \Cref{prop:low-fecund}]
		The plan of the proof is to: 1.\ Characterize candidate optimal trajectories: a) $(x_{1t},h_{1t})$ such that $\lim_{t\rightarrow\infty}(x_{1t},h_{1t})=(\hat{\underline x},\hat{\underline h})$ and b) $(x_{2t},h_{2t})$ where there exists some $\tau>0$ such that $x_{2\tau}=x_p$, 2.\ Show that $(x_{2t},h_{2t})$ is austere and 3.\ Show that there is an endogenous tipping point, $x_p'$, below which $(x_{1t},h_{1t})$ is optimal and above which $(x_{2t},h_{2t})$ is optimal.
		
		\begin{enumerate}
			\item \begin{enumerate}[a)]
				\item Begin by considering trajectories, $(x_{1t},h_{1t})$, that converge to the low notional steady-state, $(\hat{\underline x},\hat{\underline h})$.
				
				If $\hat{\underline x}<x_p$ then the standard analysis shows that for $x_0<x_p$, $(\hat{\underline x}(t),\hat{\underline h}(t))$ is the unique trajectory satisfying \eqref{eq:x-lom}, \eqref{eq:h-lom} and \eqref{eq:transversality-standard} and converges to $(\hat{\underline x},\hat{\underline h})$. Let $(x_{1t},h_{1t})=(\hat{\underline x}(t),\hat{\underline h}(t))$ and denote the corresponding value function $V_1(x)=\underline{V}(x)$.
				
				If $\hat{\underline x}\ge x_p$ then $\hat{\underline x}$ is not feasible and cannot be a steady-state. Therefore there is no trajectory satisfying \eqref{eq:x-lom}, \eqref{eq:h-lom} and \eqref{eq:transversality-standard} that converges to $(\underline{\hat x},\underline{\hat h}))$ (see \cref{fig:low-high}).
				
				\item Now consider a trajectory $(x(t),h(t))$ such that at some time $T$, $x(T)=x_p$. In order to reach $x_p$ from $x_0<x_p$, it must be that $\dot x(t)>0$. Equation \eqref{eq:x-lom} implies that $h(t)<\pi\tilde f(x(t))$ and consequently, $\lim_{t\rightarrow T}h(t)\le\pi\tilde f(x_p)$; let $h^-(T)=\lim_{t\rightarrow T}h(t)$. Upon reaching $x(T)=x_p$, the terminal payoff is $e^{-\rho T}V^*(x_p)$.
				
				Given the terminal payoff $e^{-\rho T}V^*(x_p)$, the optimal trajectory minimizes the time required to reach $x_p$ while not sacrificing too much through austere early harvests. This balance is captured by the free-stopping-time transversality condition, \eqref{eq:fixed-terminal-value-transversality-lower}.
				
				For $h=\pi\tilde f(x_p)$,
				\[
				\begin{aligned}
					\underline{\mathcal{H}}(x_p,h,\lambda) & =u(\pi\tilde f(x_p)) \\
					& <u(\tilde f(x_p))=\rho\frac{u(\tilde f(x_p))}{\rho} \\
					& \le\rho V^*(x_p). & \text{\Cref{lem:high-fecund-v-bound}}
				\end{aligned}
				\]
				Since, $\lim_{h\rightarrow0}h^{1-\sigma}/(1-\sigma)+\pi\tilde f(x_p)h^{-\sigma}=\infty$, $\lim_{h\rightarrow0}\underline{\mathcal{H}}(x_p,h,u'(h))=\infty$. %
				Continuity implies that there is at least one $h^-(T)$ such that \eqref{eq:fixed-terminal-value-transversality-lower} is satisfied.
				
				From \Cref{lem:hamiltonian-slope}, if $h<\pi\tilde f(x)$ then $\underline{\mathcal H}(x,h,u'(h))$ is strictly decreasing in $h$. Therefore there is a unique $h^-(T)\in(0,\pi\tilde f(x_p))$ such that \eqref{eq:x-lom}, \eqref{eq:h-lom} and \eqref{eq:fixed-terminal-value-transversality-lower} hold. Let $(x_{2t}, h_{2t})$ be the trajectory satisfying \eqref{eq:x-lom}, \eqref{eq:h-lom}, \eqref{eq:fixed-terminal-value-transversality-lower} such that $x_{2T}=x_p$. Let $\tau=T$.
			\end{enumerate}
			
			\item To show that $(x_{2t},h_{2t})$ is austere, consider two cases: a) $\hat{\underline x}\ge x_p$ and b) $\hat{\underline x}<x_p$.
			
			\begin{enumerate}[a)]
				\item For $\hat{\underline x}\ge x_p$, we know that $\hat{\overline x}>x_p$ so that the standard analysis holds for the high-fecundity problem and $\hat h^*(x_p)=\hat{\overline h}(x_p)$ and $V^*(x_p)=\overline{V}(x_p)$. Since $\overline{V}(x)>\underline{V}(x)$ it follows that:
				\begin{equation}	\label{eq:hamil-ineq-xp}
					\begin{aligned}
						\underline{\mathcal{H}}(x_p,h_{2\tau}^-,u'(h_{2\tau}^-)) & =\rho V^*(x_p) & \text{transversality } \\
						& =\rho\overline{V}(x_p) & \\
						& >\rho\underline{V}(x_p) \\
						& =\underline{\mathcal{H}}(x_p,\hat{\underline h}(x_p),u'(\hat{\underline h}(x_p))) & \text{HJB}.
					\end{aligned}
				\end{equation}
				where $h_{2\tau}^-=\lim_{t\rightarrow\tau}h_{2t}$.
				It was shown above that $h_{2\tau}^-\le\pi\tilde f(x_p)$ and since $\hat{\underline x}\ge x_p$, the low notional stationary point is not feasible and $\hat{\underline h}(x_p)\le\pi\tilde f(x_p)$. Since $\underline{\mathcal{H}}(x_p,h,u'(h))$ is decreasing in $h$ when $h\le\pi\tilde f(x)$ (\Cref{lem:hamiltonian-slope}), it must be that $h_{2\tau}^-<\hat{\underline h}(x_p)$. Therefore, $(x_{2t},h_{2t})$ is austere.
				
				\item For $\hat{\underline x}<x_p$, if $x_0\in(\hat{\underline x},x_p)$ then the low-fecundity optimal resource stock, $\hat{\underline x}(t)$, converges to $\hat{\underline x}$ from above (see dotted blue trajectories from \cref{fig:low-low,fig:low-med,fig:med-low,fig:med-med,fig:med-high,fig:high-low,fig:high-med,fig:high-high}) so that $\hat{\underline h}(x)>\pi\tilde f(x)$ for $x\in(\hat{\underline x},x_p)$. But we know that $(x_{2t},h_{2t})$ leads to $x_{2\tau}=x_p$ and must have $h_{2t}<\pi\tilde f(x_{2t})$ (follows from $\dot x_{2t}>0$). Therefore, $\hat{\underline h}(x_{2t})>h_{2t}$ and $(x_{2t},h_{2t})$ is austere.
			\end{enumerate}
			
			\item There are two candidate optimal trajectories and it remains to be determined when $(x_{1t},h_{1t})$ is optimal and when $(x_{2t},h_{2t})$ is optimal.
			
			For any trajectory, $(x(t),h(t))$, satisfying \eqref{eq:x-lom} and \eqref{eq:h-lom}, take the ratio of \eqref{eq:h-lom} and \eqref{eq:x-lom} to get an expression representing the slope of the corresponding policy function:
			\begin{equation*}	\label{eq:policy}
				\frac{dh}{dx}=\frac{dh/dt}{dx/dt}=\frac{1}{\sigma}\frac{h[\pi\tilde f'(x)-\rho]}{\pi\tilde f(x)-h}.
			\end{equation*}
			Now, totally differentiate $\underline{\mathcal H}(x,h,\lambda)$ where $\lambda=u'(h)$ and $h$ is a policy function satisfying \eqref{eq:x-lom} and \eqref{eq:h-lom}:
			\begin{equation}	\label{eq:hamiltonian-slope}
				\begin{aligned}
					\frac{\partial\underline{\mathcal H}}{\partial x}+
					\frac{\partial\underline{\mathcal H}}{\partial h}\frac{dh}{dx}+\frac{\partial\underline{\mathcal H}}{\partial\lambda}\frac{d\lambda}{dh}\frac{dh}{dx}&=
					u'(h)\pi\tilde f'(x)+
					[u'(h)-\lambda]\frac{dh}{dx}+[\pi\tilde f(x)-h]u''(h)\frac{dh}{dx} \\
					&=u'(h)\pi\tilde f'(x)+
					u'(h)[\rho-\pi\tilde f'(x)] \\
					&=u'(h)\rho>0.
				\end{aligned}
			\end{equation}
			Austerity of $h_2(x)$ implies $u'(h_2(x))>u'(h_1(x))$ since $u$ is strictly concave. Equation \eqref{eq:hamiltonian-slope} evaluated at and $h_2(x)$ is always greater than when it is evaluated at $h_1(x)$ and therefore \eqref{eq:hamiltonian} evaluated at $h_2(x)$ is steeper than when evaluated at $h_1(x)$. Together with the HJB equation, this implies that for $x>0$, value functions $V_1(x)$ and $V_2(x)$ cross at most once. If there is a non-zero crossing point, call it $x_p'$; otherwise set $x_p'=0$.

			The discounted payoff for trajectory $(x_{2t},h_{2t})$ is:
			\begin{equation}	\label{eq:v2}
				V_2(x_0)=\int_{0}^\tau e^{-\rho t}u(h_{2t})dt+e^{-\rho\tau}V^*(x_p).
			\end{equation}
			Using the principle of optimality, the discounted payoff for trajectory $(x_{1t},h_{1t})$ can be written as:
			\begin{equation}	\label{eq:v1}
				V_1(x_0)=\int_0^\tau e^{-\rho t}u(h_1(x_{1t}))dt+e^{-\rho\tau}V_1(x_{1\tau}).
			\end{equation}
			When \eqref{eq:v1} is greater than \eqref{eq:v2}, trajectory $(x_{1t},h_{1t})$ is optimal; when \eqref{eq:v1} is less than \eqref{eq:v2}, trajectory $(x_{2t},h_{2t})$ is optimal.
			
			Consider $x_0\in(x_p-\varepsilon,x_p)$ for $\varepsilon>0$. As $\varepsilon\rightarrow0$, it follows that $\tau\rightarrow 0$ and $x_{1\tau}\rightarrow x_p$ so that the first terms in each of these equations vanishes while the second terms converge to $V^*(x_p)$ and $\underline V(x_p)$.
			
			If follows that,
			\[
			\begin{aligned}
				V^*(x_p)&\ge\frac{u(\tilde f(x_p))}{\rho} & \text{\Cref{lem:high-fecund-v-bound}} \\
				&\ge\frac{u(\hat{\underline h}(x_p))}{\rho}  & \hat{\underline h}(x_p)\le\tilde f(x_p) \\
				&>\int_0^\infty e^{-\rho t}u(\hat{\underline h}(t))dt \\
				&=\underline{V}(x_p).
			\end{aligned}
			\]
			The third inequality follows because for $x_0>\hat{\underline x}$, trajectory $(\hat{\underline x}(t),\hat{\underline h}(t))$ has $\dot{\underline x}(t),\dot{\underline h}(t)<0$ and thus $\hat{\underline h}(x_p)>\hat{\underline h}(t)$.
			
			Therefore, for $x_0$ sufficiently close to $x_p$ (the required duration of austerity is sufficiently small), \eqref{eq:v2} is greater than \eqref{eq:v1} and trajectory $(\hat x_{2t},\hat h_{2t})$ is optimal. %
			
			Now consider $x'$ such that $\lim_{x\downarrow x'} h_2(x)=\pi\tilde f(x')$. Since $h_2$ is austere and $h_2(x)<\pi\tilde f(x)$, it must be that if $x'>0$ then $x'\ge\hat{\underline x}$ (see \Cref{fig:low-low,fig:med-low,fig:med-med,fig:high-low,fig:high-med}).
			
			When $x'>0$, let $h_2'=\pi\tilde f(x')$. The current value Hamiltonian evaluated at $(x',h_2')$ and $\lambda=u'(h_2')$ is:
			\begin{equation}	\label{eq:hamiltonian-w-foc-2}
				\underline{\mathcal{H}}(x',h_2',u'(h_2'))=u(\pi\tilde f(x')).
			\end{equation}
			The current value Hamiltonian evaluated at $(x',\hat h_1(x'))$ is:
			\begin{equation}	\label{eq:hamiltonian-w-foc-1}
				\underline{\mathcal{H}}(x',\hat h_1(x'),u'(\hat h_1(x')))=u(\hat h_1(x'))+u'(\hat h_1(x'))[\pi\tilde f(x')-\hat h_1(x')])
			\end{equation}
			Note that for $h\ge\pi\tilde f(x)$, \eqref{eq:hamiltonian} %
			is increasing in $h$ (\Cref{lem:hamiltonian-slope}). Since $h_1(x')\ge\pi\tilde f(x')$, it must be the case that at $x'$, \eqref{eq:hamiltonian-w-foc-2} is no greater than \eqref{eq:hamiltonian-w-foc-1}.
			
			Now recall that at $x=x_p$, $\underline{\mathcal{H}}(x_p,h_{2\tau}^-,u'(h_{2\tau}^-))>\underline{\mathcal{H}}(x_p,\hat{\underline h}(x_p),u'(\hat{\underline h}(x_p)))$ \eqref{eq:hamil-ineq-xp}. When $x'>0$, we know that $x'\ge\hat{\underline x}$ and continuity implies that there exists $x_p'\in[x',x_p)$ such that $\underline{\mathcal H}(x_p',\hat h_2(x_p'),u(\hat h_2(x_p')))$ $=\underline{\mathcal H}(x_p',\hat h_1(x_p'),u'(\hat h_1(x_p')))$. When $x'=0$, %
			it must be that $V_2(x)>V_1(x)$ for any $x\in(0,x_p)$; in this case set $x_p'=0$.

			The HJB equation implies that for $x<x_p'$, $V_2(x)<V_1(x)$ and for $x>x_p'$, $V_2(x)>V_1(x)$. Therefore,
			\[
			V_*(x)=\begin{cases}
				V_1(x) & \text{if }x<x_p' \\
				V_2(x) & \text{if }x\ge x_p'
			\end{cases}
			\]
			and
			\[
			\hat h_*(x)=\begin{cases}
				\hat h_1(x) & \text{if }x<x_p' \\
				\hat h_2(x) & \text{if }x\ge x_p'.
			\end{cases}
			\]
		\end{enumerate}
	\end{proof}
	
	\begin{proof}[Proof of \Cref{prop:full}]
		Note that from \Cref{lem:bound-low-policy}, we know that when $\pi$, $\rho$ and $x_p$ are sufficiently small, $\hat{\underline h}(x_p)\le\tilde f(x_p)$.
		
		For $x_0<x_p$, we know from \Cref{sec:low-fecund} that trajectory $(\hat x_{*t},\hat h_{*t})$ is optimal provided that the continuation value at time $\tau$ is $e^{-\rho\tau}V^*(x_p)$.	
		Now consider the unconstrained problem for $x_0\ge x_p$. From \Cref{prop:high-fecund}, we know that trajectory $(\hat x^*(t),\hat h^*(t))$ is optimal when $x(t)$ is constrained from falling below $x_p$.
		
		If $\hat{\overline x}\ge x_p$ then the notional steady-state is feasible and the constraint is non-binding so that trajectory $(\hat x^*(t),\hat h^*(t))=(\hat{\overline x}(t),\hat{\overline h}(t))$ is optimal.
		
		If $\hat{\overline x}<x_p$ then the question is whether there is an alternative, unconstrained trajectory, $(x^a(t),h^a(t))$, that satisfies \eqref{eq:x-lom} and \eqref{eq:h-lom} and attains greater welfare, say $V^a(x_0)$. From the argument in the proof of \Cref{prop:high-fecund}, we know that if $h^a(\hat x^*(t))<\hat h^*(t)$ then it is suboptimal.
		Alternatively, if $h^a(\hat x^*(t))>\hat h^*(t)$ then $x^a(t)$ and $h^a(t)$ fall continuously until at some $T<\tau$, $x^a(T)=x_p$ and $h^a(T)>\tilde f(x_p)$ (see
		\cref{fig:constrained-upper-boundary}). If $x^a(t)=x_p$ and $h^a(t)=\tilde f(x_p)$ for all $t>T$ then the transversality condition \eqref{eq:fixed-terminal-value-transversality-upper} fails at time $T$ and $(x^a(t),h^a(t))$ is suboptimal. If instead $x^a(t)$ falls below $x_p$ then \eqref{eq:x-lom} and \eqref{eq:h-lom} imply that $\dot h^a(t)<0$ and either $\dot x^a(t)>0$ or $\dot x^a(t)<0$ for $t>T$; $\dot x^a(t)>0$ can be ruled out because $x^a(t)$ would immediately return to $x_p$. But if $\dot x^a(t)<0$ then the only trajectory satisfying \eqref{eq:x-lom}, \eqref{eq:h-lom} and \eqref{eq:transversality-standard} is $(\hat{\underline x}(t), \hat{\underline h}(t))$. Consider the value from trajectory $(\hat x^*(t),\hat h^*(t))$ evaluated at $x_0$:
		\[
		\begin{aligned}
			V^*(x_0) & =\int_0^T e^{-\rho t}u(\hat h^*(t))dt+\int_T^\tau e^{-\rho t}u(\hat h^*(t))dt+\int_\tau^\infty e^{-\rho t}u(\tilde f(x_p))dt \\
			& \ge\int_0^T e^{-\rho t}u(h^a(t))dt+\int_T^\tau e^{-\rho t}u(\tilde f(x_p))dt+\int_\tau^\infty e^{-\rho t}u(\tilde f(x_p))dt \\
			& >\int_0^T e^{-\rho t}u(h^a(t))dt+\int_T^\tau e^{-\rho t}u(\hat{\underline h}(t))dt+\int_\tau^\infty e^{-\rho t}u(\hat{\underline h}(t))dt \\
			& = V^a(x_0).
		\end{aligned}
		\]
		The first inequality follows from the fact that $(\hat x^*(t),\hat h^*(t))$ is constrained optimal (\Cref{sec:constrained}) and $h^a(t)$ for $t\in[0,T)$ and $\tilde f(x_p)$ for $t\in[T,\tau)$ are feasible constrained harvests. The second inequality holds whenever $\hat{\underline h}(x_p)\le\tilde f(x_p)$ because $\dot{\hat{\underline h}}(t)<0$ for $t>T$. Therefore $(\hat x^*(t),\hat h^*(t))$ is optimal for the unconstrained problem. Since $(\hat x^*(t),\hat h^*(t))$ is unconstrained optimal, $(\hat x_{*t},\hat h_{*t})$ is optimal when $x_0<x_p$.
	\end{proof}

	\begin{proof}[Proof of \Cref{prop:hysteresis}]
		Note that the argument from the proof of \Cref{lem:bound-low-policy} can be used to show that if $\pi$, $\rho$ and $x_p^\mathcal h$ are sufficiently small then $\hat{\underline h}(x_p)\le\tilde f(x_p)$ and $\hat{\underline h}(x_p^\mathcal h)\le\tilde f(x_p^\mathcal h)$.
		
		As in the case without hysteresis, the problem will be divided between the high-fecundity problem where $x_0\ge x_p$, $s=1$ and recruitment is given by $\tilde f(x)$ and the low-fecundity problem where $x_0<x_p^\mathcal{h}$, $s=0$ and recruitment is given by $\pi\tilde f(x)$.
		
		The analysis for the high-fecundity problem with hysteresis is identical to the analysis without hysteresis and the optimal solution has trajectory $(\hat x^{\mathcal h*}(T),\hat h^{\mathcal h*}(T))=(\hat x^*(t),\hat h^*(t))$ for $t\ge0$ and value $V^{\mathcal h*}(x)=V^*(x)$.
		
		For the low-fecundity problem with hysteresis, the analysis of trajectories that converge to the low notional steady-state is identical to the model without hysteresis so that $(x_{1t}^\mathcal h,h_{1t}^\mathcal h)=(x_{1t},h_{1t})$ and $V_1^\mathcal h(x)=V_1(x)=\underline V(x)$.
		
		The analysis for the trajectories that transition to high-fecundity recruitment is slightly different and now occurs at some time $T$ such that $x_{2T}^\mathcal h=x_p^\mathcal h>x_p$. In this case, the transversality condition is now:
		\begin{equation*}	\label{eq:fixed-terminal-value-transversality-hysteresis}
			\lim_{t\rightarrow T}\underline{\mathcal{H}}(x(t),h(t),\lambda(t))=\rho V^{\mathcal{h}*}(x_p^\mathcal{h}).
		\end{equation*}
		The proof instead requires $\hat{\underline h}(x_p^\mathcal h)\le\tilde f(x_p^\mathcal h)$ but is otherwise identical; let the optimal trajectory be given by $(x_{2t}^\mathcal h,h_{2t}^\mathcal h)$ with value $V_2^\mathcal h(x)$.
		
		Proof of the optimality of the composite trajectory is the same as for \Cref{prop:full}, requiring $\hat{\underline h}(x_p)\le\tilde f(x_p)$. Let $x_p^{\mathcal h\prime}$ be the hysteretic endogenous tipping point,
		\begin{equation*}
			V_*^\mathcal h(x) = \begin{cases}
				V_1^\mathcal h(x) & \text{if }x<x_p^{\mathcal h\prime} \\
				V_2^\mathcal h(x) & \text{if }x\ge x_p^{\mathcal h\prime}
			\end{cases}.
		\end{equation*}
		The hysteretic value function is thus:
		\begin{equation*}
			V^\mathcal h(x,s)=\begin{cases}
				V_*^\mathcal h(x) & \text{if }x<x_p^\mathcal h\text{ and }s=0 \\
				V^{\mathcal h*}(x) & \text{if }x\ge x_p^\mathcal h\text{ and }s=1
			\end{cases}	
		\end{equation*}

		Note that it must be the case that for $x_p^\mathcal h\le x<x_p$, $V_2(x)>V_2^\mathcal h(x)$ since without hysteresis, consumption path $\hat h_{2t}^\mathcal h$ is feasible but $\hat h_{2t}$ is uniquely optimal. Clearly, $V_1(x)=V_1^\mathcal h(x)$. Together, this implies that $x_p'<x_p^{\mathcal h\prime}$.
	\end{proof}
	
	\nocite{skiba:optimal}

	\bibliography{growth}

\begin{thebibliography}{44}
\newcommand{\enquote}[1]{``#1''}
\providecommand{\natexlab}[1]{#1}

\bibitem[{AAR(2025)}]{aar:study}
AAR, 2025, \enquote{Study of Catenary Electrification of the North American
  Class I Railroad Network,} Tech. rep., Association of American Railroads,
  Washington, DC.

\bibitem[{Arvaniti et~al.(2023)Arvaniti, Krishnamurthy and
  Crépin}]{arvaniti:time}
Arvaniti, M., C.~K.~B. Krishnamurthy and A.-S. Crépin, 2023,
  \enquote{Time-consistent Renewable Resource Management with Present Bias and
  Regime Shifts,} \emph{Journal of Economic Behavior and Organization}, 207:
  479--495.

\bibitem[{ASCE(2021)}]{asce:comprehensive}
ASCE, 2021, \enquote{A Comprehensive Assessment of America's Infrastructure,}
  Tech. rep., American Society of Civil Engineers, Reston, VA.

\bibitem[{Azariadis and Drazen(1990)}]{azariadis:threshold}
Azariadis, C. and A.~Drazen, 1990, \enquote{Threshold {E}xternalities in
  {E}conomic {D}evelopment,} \emph{Quarterly Journal of Economics}, 105(2):
  501–526.

\bibitem[{Baggio and Fackler(2016)}]{baggio-fackler:optimal}
Baggio, M. and P.~L. Fackler, 2016, \enquote{Optimal Management with Reversible
  Regime Shifts,} \emph{Journal of Economic Behavior and Organization}, 132:
  124--136.

\bibitem[{Cass(1965)}]{cass:optimum}
Cass, D., 1965, \enquote{Optimum Growth in an Aggregative Model of Capital
  Accumulation,} \emph{Review of Economic Studies}, 32(3): 233--240.

\bibitem[{Clark(1973{\natexlab{a}})}]{clark:economics}
Clark, C.~W., 1973{\natexlab{a}}, \enquote{The Economics of Overexploitation,}
  \emph{Science}, 181(4100): 630--634.

\bibitem[{Clark(1973{\natexlab{b}})}]{clark:profit}
Clark, C.~W., 1973{\natexlab{b}}, \enquote{Profit Maximization and the
  Extinction of Animal Species,} \emph{Journal of Political Economy}, 81(4):
  950--961.

\bibitem[{Clark(2010)}]{clark:mathematical}
Clark, C.~W., 2010, \emph{Mathematical {B}ioeconomics: {T}he {M}athematics of
  {C}onservation}, Hoboken, NJ: Wiley, 3rd edn.

\bibitem[{de~Zeeuw and He(2017)}]{zeeuw:managing}
de~Zeeuw, A. and X.~He, 2017, \enquote{Managing a Renewable Resource Facing the
  Risk of a Regime Shift in the Ecological System,} \emph{Resource and Energy
  Economics}, 48: 42--54.

\bibitem[{DFO(2021)}]{dfo:stock-2020}
DFO, 2021, \enquote{2020 Stock Status Update for Northern Cod,} Tech. rep., DFO
  Canadian Science Advisory Secretariat.

\bibitem[{DFO(2024)}]{dfo:stock-2023}
DFO, 2024, \enquote{Update of Stock Status Indicators for the Northern Gulf of
  St. Lawrence Atlantic Cod Stock (3PN, 4RS) in 2023,} Tech. rep., DFO Canadian
  Science Advisory Secretariat.

\bibitem[{Dudgeon et~al.(2010)}]{dudgeon:phase}
Dudgeon, S.~R. et~al., 2010, \enquote{Phase {S}hifts and {S}table {S}tates on
  {C}oral {R}eefs,} \emph{Marine Ecology Progress Series}, 413: 201--216.

\bibitem[{FAO and UNEP(2020)}]{fao:state}
FAO and UNEP, 2020, \enquote{The State of the World's Forests 2020: Forests,
  biodiversity and people,} resreport, Food and Agriculture Organization of the
  United Nations and United Nations Environmental Programme, Rome.

\bibitem[{Faria et~al.(2023)}]{faria:breakdown}
Faria, D. et~al., 2023, \enquote{The Breakdown of Ecosystem Functionality
  Driven by Deforestation in a Global Biodiversity Hotspot,} \emph{Biological
  Conservation}, 283: 110,126.

\bibitem[{Felbab-Brown(2017)}]{felbab-brown:extinction}
Felbab-Brown, V., 2017, \emph{The Extinction Market: Wildlife Trafficking and
  How to Counter It}, Oxford: Oxford University Press.

\bibitem[{Field et~al.(2007)}]{field:coral}
Field, M.~E. et~al., 2007, \enquote{The Coral Reef of South Moloka'i, Hawai'i:
  Portrait of a Sediment-Threatened Fringing Reef,} Tech. rep., U.S. Geological
  Survey.

\bibitem[{Gordon(1954)}]{gordon:economic}
Gordon, H.~S., 1954, \enquote{The Economic Theory of a Common-Property
  Resource: the Fishery,} \emph{Journal of Political Economy}, 62(2): 124--142.

\bibitem[{Hirota et~al.(2011)}]{hirota:global}
Hirota, M. et~al., 2011, \enquote{Global Resilience of Tropical Forest and
  Savanna to Critical Transitions,} \emph{Science}, 334(6053): 232--235.

\bibitem[{Hunsicker et~al.(2018)}]{hunsicker:characterizing}
Hunsicker, M.~E. et~al., 2018, \enquote{Characterizing Driver-response
  Relationships in Marine Pelagic Ecosystems for Improved Ocean Management,}
  \emph{Ecological Applications}, 26(3): 651--663.

\bibitem[{Hutchings and Myers(1994)}]{hutchings:what}
Hutchings, J.~A. and R.~A. Myers, 1994, \enquote{What {C}an {B}e {L}earned from
  the {C}ollapse of a {R}enewable {R}esource? {A}tlantic {C}od, {G}adus Morhua,
  of {N}ewfoundland and {L}abrador,} \emph{Canadian Journal of Fisheries and
  Aquatic Sciences}, 51(9): 2126--2146.

\bibitem[{Jackson(2008)}]{jackson:ecological}
Jackson, J. B.~C., 2008, \enquote{Ecological Extinction and Evolution in the
  Brave New Ocean,} \emph{Proceedings of the National Academy of Sciences},
  105(Supplement 1): 11,458--11,465.

\bibitem[{Kardos et~al.(2021)}]{kardos:crucial}
Kardos, M. et~al., 2021, \enquote{The Crucial Role of Genome-wide Genetic
  Variation in Conservation,} \emph{Proceedings of the National Academy of
  Sciences}, 118(48): e2104642,118.

\bibitem[{Koopmans(1965)}]{koopmans:concept}
Koopmans, T., 1965, \enquote{On the Concept of Optimal Economic Growth,} in
  \enquote{The Economic Approach to Development Planning,} pp. 225--287,
  Chicago: Rand McNally.

\bibitem[{Kvamsdal(2022)}]{kvamsdal:optimal}
Kvamsdal, S.~F., 2022, \enquote{Optimal Management of a Renewable Resource
  Under Multiple Regimes,} \emph{Environmental and Resource Economics}, 81(3):
  481--499.

\bibitem[{Levhari and Mirman(1980)}]{levhari:great}
Levhari, D. and L.~J. Mirman, 1980, \enquote{The Great Fish War: An Example
  Using a Dynamic Cournot-Nash Solution,} \emph{Bell Journal of Economics}, 11:
  322–334.

\bibitem[{Lindig-Cisneros et~al.(2003)Lindig-Cisneros, Boyer and
  Zedler}]{lindig-cisneros:wetland}
Lindig-Cisneros, R., J.~D. K.~E. Boyer and J.~B. Zedler, 2003, \enquote{Wetland
  Restoration Thresholds: Can a Degradation Transition be Reversed with
  Increased Effort?} \emph{Ecological Applications}, 13(1): 193--205.

\bibitem[{Malhado et~al.(2010)Malhado, Pires and Costa}]{malhado:cerrado}
Malhado, A. C.~M., G.~F. Pires and M.~H. Costa, 2010, \enquote{Cerrado
  Conservation is Essential to Protect the Amazon Rainforest,} \emph{AMBIO},
  39(8): 580--584.

\bibitem[{NARP(2015)}]{narp:narp}
NARP, 2015, \enquote{NARP Maps Out \$209 Billion in Pent-Up Rail Investment,}
  NARP member newsletter.

\bibitem[{Nkuiya and Diekert(2023)}]{nkuiya:stochastic}
Nkuiya, B. and F.~Diekert, 2023, \enquote{Stochastic Growth and Regime Shift
  Risk in Renewable Resource Management,} \emph{Ecological Economics}, 208:
  107,793.

\bibitem[{Nobre and Borma(2009)}]{nobre:tipping}
Nobre, C.~A. and L.~D.~S. Borma, 2009, \enquote{`Tipping points' for the Amazon
  forest,} \emph{Current Opinion in Environmental Sustainability}, 1(1):
  28--36.

\bibitem[{Polasky et~al.(2011)Polasky, de~Zeeuw and Wagener}]{polasky:optimal}
Polasky, S., A.~de~Zeeuw and F.~Wagener, 2011, \enquote{Optimal Management with
  Potential Regime Shifts,} \emph{Journal of Environmental Economics and
  Management}, 62(2): 229--240.

\bibitem[{Ramsey(1928)}]{ramsey:mathematical}
Ramsey, F.~P., 1928, \enquote{A Mathematical Theory of Saving,} \emph{Economic
  Journal}, 38(152): 543--559.

\bibitem[{Reed(1988)}]{reed:optimal}
Reed, W.~J., 1988, \enquote{Optimal Harvesting of a Fishery Subject to Random
  Catastrophic Collapse,} \emph{Mathematical Medicine and Biology}, 5(3):
  215--235.

\bibitem[{Ren and Polasky(2014)}]{ren:optimal}
Ren, B. and S.~Polasky, 2014, \enquote{The optimal management of renewable
  resources under the risk of potential regime shift,} \emph{Journal of
  Economic Dynamics and Control}, 40: 195--212.

\bibitem[{Rose and Rowe(2015)}]{rose:northern}
Rose, G.~A. and S.~Rowe, 2015, \enquote{Northern Cod Comeback,} \emph{Canadian
  Journal of Fisheries and Aquatic Sciences}, 72(12): 1789--1798.

\bibitem[{Scheffer et~al.(2001)}]{scheffer:catastrophic}
Scheffer, M. et~al., 2001, \enquote{Catastrophic Shifts in Ecosystems,}
  \emph{Nature}, 413(6856): 591--596.

\bibitem[{Scott(1955)}]{scott:fishery}
Scott, A., 1955, \enquote{The Fishery: The Objectives of Sole Ownership,}
  \emph{Journal of Political Economy}, 63(2): 116--124.

\bibitem[{Selkoe et~al.(2015)}]{selkoe:principles}
Selkoe, K.~A. et~al., 2015, \enquote{Principles for Managing Marine Ecosystems
  Prone to Tipping Points,} \emph{Ecosystem Health and Sustainability}, 1(5):
  1--18.

\bibitem[{Skiba(1978)}]{skiba:optimal}
Skiba, A.~K., 1978, \enquote{Optimal Growth with a Convex-Concave Production
  Function,} \emph{Econometrica}, 46(3): 527--539.

\bibitem[{Smith(1968)}]{smith:economics}
Smith, V.~L., 1968, \enquote{Economics of Production from Natural Resources,}
  \emph{American Economic Review}, 58(1): 409--431.

\bibitem[{Storlazzi et~al.(2009)}]{storlazzi:sedimentation}
Storlazzi, C. et~al., 2009, \enquote{Sedimentation Processes in a Coral Reef
  Embayment: Hanalei Bay, Kauai,} \emph{Marine Geology}, 264: 140--151.

\bibitem[{Walters and Maguire(1996)}]{walters:lessons}
Walters, C. and J.-J. Maguire, 1996, \enquote{Lessons for Stock Assessment from
  the Northern Cod Collapse,} \emph{Reviews in Fish Biology and Fisheries}, 6:
  125--137.

\bibitem[{Worm et~al.(2006)}]{worm:impacts}
Worm, B. et~al., 2006, \enquote{Impacts of Biodiversity Loss on Ocean Ecosystem
  Services,} \emph{Science}, 314(5800): 787--790.

\end{thebibliography}
	\bibliographystyle{mybibst}
\end{document}